\documentclass[12pt]{article}

\usepackage{amssymb}
\usepackage{amsmath}
\usepackage{amsthm}
\usepackage{graphicx}
\usepackage{xcolor}
\usepackage{natbib}
\usepackage{booktabs}
\usepackage{makecell}
\usepackage{url}
\usepackage[T1]{fontenc}
\usepackage{enumitem}
\usepackage{cancel}
\usepackage[normalem]{ulem}
\usepackage{float}
\usepackage{tikz}

\theoremstyle{plain}

\newtheorem{proposition}{Proposition}

\newtheorem{theorem}{Theorem}

\theoremstyle{definition}
\newtheorem{example}{Example}
\newtheorem{definition}{Definition}

\theoremstyle{remark}
\newtheorem{remark}{Remark}

\newcommand{\domain}{\bigcup_{\mathcal{D}\subseteq\mathfrak{X}}\mathcal{R}(\mathcal{D})}
\newcommand{\codomain}{\mathcal{B}(X)}
\newcommand{\sign}{\operatorname{sign}}
\newcommand{\topsum}{\mathbin{\mid \joinrel =}}
\newcommand{\bottomsum}{\mathbin{= \joinrel \mid}}
\newcommand{\He}{\sign}

\title{Consistencies in Social Ranking}

\author{
Takahiro Suzuki\thanks{Corresponding author. Email: suzuki-tkenmgt@g.ecc.u-tokyo.ac.jp} \\
Department of Civil Engineering \\
Graduate School of Engineering \\
The University of Tokyo \\
Tokyo 113-8656, Japan
\and
Michele Aleandri \\
LUISS University \\
Viale Romania 32 \\
00197 Rome, Italy
\and
Stefano Moretti \\
LAMSADE, CNRS, Université Paris-Dauphine, Université PSL \\
75016 Paris, France
}

\date{}

\begin{document}
\maketitle



\begin{abstract}
Ranking individuals based on their performance in different coalitions is a problem emerging in various domains (teams sports, scientific evaluation, argumentation, etc.). Often, for practical reasons, the number of comparable coalitions is limited. Therefore, the foundational principles of ranking solutions must support realistic interpretations in contexts where only certain coalitions can be compared. To address this issue, in this paper we present an axiomatic analysis of solutions for the social ranking problem centered on the notion of {\it consistency}. More precisely, we show that an appropriate notion of consistency, which specifies how to combine rankings on individuals across different rankings on coalitions, plays a key role in any axiomatic characterization, representing the true distinguishing feature of each solution. This role is further highlighted by the taxonomy of the complementary axioms used in our characterizations, which boil down to well-studied properties of invariance with respect to the label of players or coalitions, and also with respect to minor changes in a coalitional ranking. By showing the logical independence of the axioms used in each characterization, as well as a rigorous analysis of alternative notions of consistency with respect to the majority of solutions from the literature, this work attempts to provide a first systematic study of the social ranking problem over a variable domain of coalitions.
\end{abstract}

\bigskip
\noindent\textbf{Keywords:} Social ranking solution; consistency; Lexicographic excellence solution; Sign lexicographic excellence solution



\section{Introduction}
In many real-life situations, as in the ranking of team sports players or in the process to select best researchers,  individuals are ranked based on their role in coalitions (subsets of individuals): football players, for instance, are compared considering their collaborative performances on different lineups, and researchers are compared looking at their joint works with multiple co-authors.
Thus, we require formal methods to convert the  information about coalitions' comparison into an ordinal comparison between individuals. To this purpose, a \textit{social ranking solution} (SRS) is a function that maps a ranking of coalitions to a ranking of individuals keeping into account their synergies. The notion of SRS was proposed in \cite{Moretti2017} and a number of subsequent studies have been conducted in the last decade (we refer to the ``Related literature'' section for a short introduction).
A technical difficulty regarding the study of SRSs lies within the abundance of possible coalitions: there are $2^{n}$ logically possible coalitions (subsets) when there are $n$ individuals. This causes not only the computational difficulty of the analysis, but also the difficulty of even obtaining the input data (i.e., comparing all the logically possible coalitions is unrealistic in most cases). 

In this view, we will study SRSs with a variable domain of coalitions---that is, the set of all feasible coalitions can vary. A key axiom in this context is \textit{consistency} and 
its study for SRSs can be traced back to \cite{Suzuki2024}. Roughly speaking, this property requires that the ranking of individuals must not depend on how the coalitional rankings over disjoint families of subsets of the individuals’ set are combined; thus, we can reasonably focus on the partial information. For instance, in team sports, a technical director of a team might face a problem when combining a ranking of lineups based on their performance in a national league with a ranking of alternative lineups performing in an international competition; the director of a laboratory might have information to compare groups of researchers according to a criterion linked to scientific production (e.g., co-authored papers), but some papers are co-authored by a few authors, and others by large collaboration groups;
or a streaming platform might be interested in comparing the success of bunches of movies proposed by competing streaming platforms to finally improve its offer of films, etc. In these situations, even if the original population on which one wants to establish a ranking is the same,  the use of evaluation criteria that are incommensurable with each other makes some coalitions hardly comparable. 

As a more detailed example, consider the evaluation of two hypothetical researchers, referred to as $a$ and $b$, who contributed to scientific publications within two distinct small collaborative groups, $S_a$ and $S_b$, respectively, as well as within large-scale collaborations with many co-authors, denoted as $L_a$ and $L_b$. In this context, the evaluation of their individual contributions must take into account the different nature of authorship attribution dynamics: within $S_a$ and $S_b$, the intellectual contribution  of roles can be more easily attributed to each author, given the limited number of collaborators. Conversely, within $L_a$ and $L_b$, the evaluation of of $a$ and $b$'s contributions should focus on more complex criteria, such as theoretical and methodological skills, research leadership, etc., rather than relying solely on authorship position, which may not reflect the nuanced nature of authors' involvement in large collaborative efforts. Therefore, distinct rankings could be established between $a$ and $b$, depending on whether one considers the performance of $S_a$ {\it vs.} $S_b$, rather than that of $L_a$ {\it vs.} $L_b$.
However, the publication performance of the four teams can be compared on a well-defined evaluation dimension, such as journal quality or publication impact, and thus an overall ranking between the groups can also be established.
Alternative definitions of consistency properties identify how partial rankings on coalitions of similar size should extend to situations where all coalitions are compared to each other, independently on their size.
Suppose, for instance, that due to the quality of the journals where their  papers have been published, the publication performance of group $S_a$ is not worse than the one of group $S_b$ (in short, $S_a \succsim_1 S_b$), and  the publication performance of group $L_a$ is not worse than the one of group $L_b$ (in short, $L_a \succsim_2 L_b$). It seems reasonable to assume that, if one believes that researcher $a$ has greater merit than researcher $b$ for both coalitional rankings $\succsim_1$ and $\succsim_2$   then, for any ranking $\succsim$ constructed as a  compatible combination of $\succsim_1$ and $\succsim_2$ (e.g., $S_a \succsim S_b \succsim L_a \succsim L_b$, or $L_a \succsim S_a \succsim S_b \succsim L_b$, or $S_a \succsim L_a \succsim L_b \succsim S_b$, etc.) $a$ should also be regarded as having greater merit than $b$.

More generally, let $\mathcal{D}$ be a set of coalitions of a finite set $X$ of individuals, and $\mathcal{D}_{1}$ and $\mathcal{D}_{2}$ be a partition of $\mathcal{D}$.
Assume that the performance ranking of coalitions in $\mathcal{D}$ is $\succsim$; $\succsim_{i}$ is the restriction of $\succsim$ to $\mathcal{D}_{i}$ for each $i=1,2$ (put the other way around, $\succsim$ is a ranking on $\mathcal{D}$ compatible with both $\succsim_{1}$ and $\succsim_{2}$). Then, consistency requires that for any two individuals $x$ and $y$, if the social ranking of $x$ is at least as good as that of $y$ at both $\succsim_{1}$ and $\succsim_{2}$, then so should it be at $\succsim$; furthermore, in such a situation, if the social ranking is strict at either $\succsim_{1}$ or $\succsim_{2}$, then the social ranking of $x$ must be strictly better than that of $y$ at $\succsim$. In this sense, consistency requires that the social ranking obtained from the partial information in $\succsim_{1}$ and $\succsim_{2}$ should be in agreement with the social ranking obtained from the whole information in their combination ($\succsim$). 

Clearly, the consistency principle is logically demanding due to the abundance of combinations compatible with the two rankings $\succsim_{1}$ and $\succsim_{2}$. Therefore, alternative approaches can be adopted in contexts characterized by information that 
could significantly limit the number of compatible combinations. 
For example, the notion of {\it concatenation consistency} deals with situations where the two rankings $\succsim_{1}$ and $\succsim_{2}$ are combined to reflect a hierarchy of priorities over the criteria used to rank coalitions. 
Following the previous example with small and large groups of authors, the extension ranking  $\succsim$ according to the concatenation sum is the one reflecting a higher priority of the ranking on small groups over the one on larger groups, so that
$\succsim_{1}$ and $\succsim_{2}$ are juxtaposed to produce  $S_a \succsim S_b \succ L_a \succsim L_b$.
Following an alternative direction aimed at maintaining a common structure of equivalence classes between the two rankings $\succsim_{1}$ and $\succsim_{2}$, we investigate other notions of sum where equivalence classes are merged either starting from the best coalitions ({\it top-aligned sum}) or from the worst ones ({\it bottom-aligned sum}). So, in the previous example,  if $S_a \sim_1 S_b$ (the two small groups are indifferent in terms of publication performance)  and $L_a \succ_2 L_b$ (group $L_a$ performs strictly stronger  than $L_b$)
 the top-aligned sum is $S_a \sim S_b \sim L_a \succ L_b$, while the bottom-aligned sum is $ L_a \succ S_a \sim S_b \sim L_b$.

 
 As already mentioned, the choice of an appropriate definition of sum is context-dependent and might reflect how criteria for coalitions performance are compared. Nevertheless, each corresponding notion of consistency relies on the same rules to combine the outcomes of an SRS over two rankings, no matter which notion of sum is adopted: if an individual $x$ is socially not worse than an individual $y$ on both $\succsim_{1}$ and $\succsim_{2}$, it must be so even on the sum $\succsim$ (see Section \ref{axioms_consist} for more details).
In this paper, we study the effects of these alternative notions of consistency for SRSs. We show that each of these axioms plays a key role in characterizing a particular SRS; this is together with other families of complementary axioms that are grouped into two main families:  axioms invariant to some permutations of individuals or coalitions, and axioms involving a notion of independence of minor modifications in the coalitional ranking  (see  Section \ref{sec:compaxioms} for more details). Concatenation consistency is crucial for the axiomatic characterization of the Sign lex-cel solution (Theorem \ref{theorem:SL}), a discrete version of lex-cel \cite{Bernardi2019}. Instead, top-aligned consistency is fundamental for axiomatizing Plurality (Theorem \ref{theorem:plurality}), another SRS recently introduced in \cite{Suzuki2025_Mill_SRS}.
Finally, the lex-cel SRS is characterized by the combination of multiple notions of consistency.
Overall, the three axiomatic characterizations we propose and study in this paper show the existence of a common block of properties satisfied by the three distinct SRSs, as well as a peculiar role played by alternative notions of consistency that we believe represent the distinctive features of each SRS studied in this work.



The roadmap of the paper is as follows. 
We start with the basic model description in Section \ref{section:social ranking solutions}. Consistency axioms are introduced in Section \ref{axioms_consist}. After we provide some preliminary result (subsection \ref{subsection:relations among consistency axioms}) and complementary axioms (subsection \ref{sec:compaxioms}), our main results will be presented in Section \ref{sec:results}. Finally, the concluding remarks are given in Section \ref{section:conclusion}.

\subsection*{Related literature}

Consistency is a classic notion in social choice theory; it is popular in the study of various voting rules such as Borda rule \cite{Young1974a}, symmetric scoring rules \cite{Heckelman2020}, Approval Voting \cite{Fishburn1978,Vorsatz2007,Sato2014c}, Disapproval voting \cite{Alcantud2014}, Weighted approval voting \cite{Masso2008}. Our study introduces a variants of consistencies toward the study of SRSs.

The theory of SRS has grown rapidly in the last decade. From pioneering work \cite{Moretti2017}, a number of plausible solutions have been provided and axiomatically characterized: lex-cel \cite[]{Bernardi2019} and its variants \cite[]{Algaba2021,bengel2025extension,beal2022lexicographic,serramia2020qualitative,leroy2024ranking}, CP majority \cite[]{Haret2019a,fayard2018ordinal,Suzuki2026Unified}, ordinal Banzhaf index \cite[]{Khani2019,banzhaf1964weighted}, plurality \cite[]{Suzuki2025_Mill_SRS}, and intersection initial segment (IIS) \cite[]{Suzuki2026Ranking}. Some previous studies have also conducted a comparative study of multiple SRSs from a particular axiomatic viewpoint ---such as the ones using well-studied properties like desirability \cite[]{Aleandri2024a, maschler1966characterization, I58}  and independence \cite[]{Suzuki2025independence}--- or a game-theoretic perspective --- such us the articles exploring sabotage-proofness \cite[]{Suzuki2024c} or core-stability \cite[]{beal2023core}---. Focusing on the notion of consistency introduced in \cite{Suzuki2024}, the present study considers some new variants of consistency, and analyze lex-cel, sign lex-cel (a new SRS), and plurality in a comparative manner.

\section{Social ranking solutions}
\label{section:social ranking solutions}
For set $A$, we denote the set of all weak orders\footnote{A binary relation on $A$ that is reflexive, complete, and transitive.} on $A$ as $\mathcal{R}(A)$, and the set of all reflexive and complete binary relations on $A$ as $\mathcal{B}(A)$. 

Let $X$ be the finite set of individuals with $3\leq|X|<\infty$. Throughout the paper, $\vartriangleright$ denotes a fixed linear order on $X$. Let $\mathfrak{X}:=\mathfrak{P}(X)\setminus\{\emptyset\}$ be the set of all (logically possible) \textit{coalitions}. 

\begin{definition}
A \textit{social ranking solution} (\textit{SRS})  is a function $R:\domain\rightarrow\codomain$. 
\end{definition}

Our definition of the SRS follows \cite{Suzuki2024}. It maps a weak order $\succsim\in\domain$ to a reflexive and complete binary relation on $X$. For $\succsim\in\domain$, $R(\succsim)$ is often denoted as $R_{\succsim}$ for simplicity. The symmetric and asymmetric parts are denoted as $I_{\succsim}$ and $P_{\succsim}$ respectively (i.e., $xI_{\succsim}y\iff (xR_{\succsim}y\;\&\;yR_{\succsim}x)$ and $xP_{\succsim}y\iff (xR_{\succsim}y\;\&\;\neg yR_{\succsim}x)$). When the input $\succsim$ is a weak order on $\mathcal{D}\subseteq \mathfrak{X}$ (i.e., $\succsim\in\mathcal{R}(\mathcal{D})$), we call $\mathcal{D}$ as the \textit{coalition domain} of $\succsim$, denoted as $\mathcal{D}(\succsim)$---it is interpreted as the set of all comparable coalitions at $\succsim$. According to this terminology, an SRS maps a weak order on feasible coalitions to a reflexive and complete binary relation on the individuals. 

For $x\in X$ and $\mathcal{D}\subseteq \mathfrak{X}$, we denote the set of all elements (coalitions) in $\mathcal{D}$ that contain $x$ by $\mathcal{D}[x] := \{S \in \mathcal{D} \mid x \in S\}$. We also denote the intersection and union of all elements in $\mathcal{D}$ as $\bigcap \mathcal{D}:= \{x \in X \mid x \in S, \forall S  \in \mathcal{D}\}$ and $\bigcup \mathcal{D}:= \bigcup_{S\in\mathcal{D}}S$. For a nonnegative integer $l$, we denote by $[l]$ the set of all positive integers equal to or smaller than $l$, that is, $[l]:=\{b\in\mathbb{Z}\mid 1\leq b\leq l\}$.

The purpose of this section is to introduce several SRSs for the axiomatic analysis (Definition \ref{definition:lex-cel and S lex-cel}, \ref{definition:pluralities}, \ref{definition:CP}, and \ref{definition:IIS}). For these definitions, we first need to introduce some extra notation.  

A weak order $\succsim\in\domain$ is often denoted by arraying its equivalence classes as follows $\succsim:\Sigma_{1}\succ\Sigma_{2}\succ\cdots\succ\Sigma_{l}$. Each $\Sigma_{k}$ is an equivalence class of $\succsim$, that is, for any $S\in\Sigma_{k}$, we have $\Sigma_{k}=\{T\in\mathcal{D}(\succsim)\mid S\sim T\}$; for any $k,k'\in[l]$ with $k<k'$, we assume that $S\succ T$ for all $S\in\Sigma_{k}$ and $T\in\Sigma_{k'}$. For $x\in X$ let $\Sigma_{k}[x]=\{S\in\Sigma_k\mid x\in S\}$ and define 
\begin{align*}
    &\theta_{\succsim}(x):=(x_{1},\cdots,x_{l}),\text{ where } x_{k}:=\lvert\Sigma_{k} [x]\rvert,\ \forall k\in[l];\\
    & \dot{\theta}_{\succsim}(x):=(\dot{x}_{1},\cdots,\dot{x}_{l}), \text{ where }\dot{x}_{k}:=\He(x_{k})(=\He(\lvert\Sigma_{k}[x]\rvert)), 
\end{align*}
where, for all $k=1, \ldots, l$, 
 $\He(x_{k})$ is defined as follows: 
$$\He(x_{k}):=\begin{cases}
1&\text{ if }x_{k}>0,\\
0&\text{ if }x_{k}\leq0.\\
\end{cases}
$$

\begin{definition}
\label{definition:lex-cel and S lex-cel}
 For any $\succsim\in\domain$ and $x,y\in X$, we define \textit{lexicographic excellence solution} (\textit{lex-cel}), denoted as $R^{L}$, and \textit{Sign lexicographic excellence solution} (\textit{S lex-cel}), denoted as $R^{SL}$, as follows: 
\begin{align*}
&xR_{\succsim}^{L}y\iff \theta_{\succsim}(x)\geq^{L}\theta_{\succsim}(y),\\ 
&xR^{SL}_{\succsim}y\iff \dot{\theta}_{\succsim}(x)\geq^{L}\dot{\theta}_{\succsim}(y),
\end{align*}
where, for $a,b\in\mathbb{R}^n$, $a\geq^{L}b$ whether $a=b$ or if $\exists k\in[n]$ such that $a_i=b_i$ for all $i<k$ and $a_k >b_k$.   
\end{definition}

Other solutions based on the cardinality of some specific equivalence classes are defined as follows.

\begin{definition}
\label{definition:pluralities}
For any $\succsim\in\domain$ with $\succsim:\Sigma_{1}\succ\cdots\succ\Sigma_{l}$ and $x,y\in X$, we define \textit{plurality} $R^{P}$, \textit{sign plurality} (\textit{S plurality}) $R^{SP}$, and \textit{anti-plurality} $R^{AP}$ as follows:
\begin{align*}
xR^{P}_{\succsim}y&\iff x_{1}\geq y_{1};\\
xR^{SP}_{\succsim}y&\iff \dot{x}_{1}\geq \dot{y}_{1};\\
xR^{AP}_{\succsim}y&\iff x_{l}\leq y_{l}.
\end{align*}
\end{definition}

For any $\succsim\in\domain$ and $x,y\in X$, we denote as $C_{xy}=|\{S \in \mathfrak{X}: S \cap \{x,y\}=\emptyset, S \cup \{x\}, S \cup \{y\} \in \mathcal{D}(\succsim), S \cup \{x\} \succ S \cup \{y\}  \}|$ the number of \textit{CP comparisons} \cite{Haret2019a} in favor of $x$ against $y$.
\begin{definition}
\label{definition:CP}
 For any $\succsim\in\domain$ and $x,y\in X$.
 We define \textit{Ceteris Paribus majority solution} (\textit{CP majority}), denoted as $R^{CPM}$, as follows: 
\begin{align*}
&xR_{\succsim}^{CPM}y\iff C_{xy}\geq C_{yx}. 
\end{align*}

\end{definition}

 For any $\succsim\in\domain$ such that $\succsim:\Sigma_1\succ\ldots\succ\Sigma_l$, let $T_{k} := \bigcap  ( \bigcup_{j \leq k} \Sigma_{j} ) = \{ x \in X: \forall j \leq k,\forall S \in \Sigma_{j},x \in S \}$. For every $x \in X$, let $e_{\succsim}(x)$ be an integer such that
\[
e_{\succsim}(x):=
\begin{cases}
\max\{k\in [l]\mid x\in T_{k}\} & \text{if } x\in T_{1},\\
0                               & \text{otherwise.}
\end{cases}
\]
\begin{definition}
\label{definition:IIS}
For any $\succsim\in\domain$ and $x,y\in X$, we define \textit{intersection initial segment} (\textit{IIS}) $R^{IIS}$ as follows:
\begin{align*}
R^{IIS}_{\succsim}y \iff e_{\succsim}(x)\geq e_{\succsim}(y).
\end{align*}
\end{definition}

\begin{example}
Let $X=\{x,y,z,w\}$, $\succsim_{1}:\{\{x,y,z\},\{x,y\}\}\succ_{1}\{\{x\},$ $\{x,z\},\{y\}\}\succ_{1}\{\{z\},\{w\}\}$ and $\succsim_{2}:\{\{x,w\},\{y,w\},\{z,w\}\}$. Then, we obtain the following table.
\[
\begin{array}{c|c|c|c|c|c|c}
 & \theta_{\succsim_{1}}(\cdot)
 & \dot{\theta}_{\succsim_{1}}(\cdot)
 & e_{\succsim_{1}}(\cdot)
 & \theta_{\succsim_{2}}(\cdot)
 & \dot{\theta}_{\succsim_{2}}(\cdot)
 & e_{\succsim_{2}}(\cdot) \\[3pt]
\hline
x & (2,2,0) & (1,1,0) & 1 & (1) & (1) & 0 \\
y & (2,1,0) & (1,1,0) & 1 & (1) & (1) & 0 \\
z & (1,1,1) & (1,1,1) & 0 & (1) & (1) & 0 \\
w & (0,0,1) & (0,0,1) & 0 & (3) & (1) & 1
\end{array}
\]
Accordingly, we can verify the following: 
\begin{align*}
   &R^{L}_{\succsim_{1}}:x\succ y\succ z\succ w,\,  R^{SL}_{\succsim_{1}}:z\succ\{x,y\}\succ \{w\},\, R^{SP}_{\succsim_{1}}:\{x,y,z\}\succ w;\\
   &R^{CPM}_{\succsim_{1}}=\{(x,y),(y,x),(x,z),(x,w),(y,z),(y,w),(z,w),(w,z)\};\\
   &R^{P}_{\succsim_{1}}:\{x,y\}\succ z\succ w,\quad R^{AP}_{\succsim_{1}}:\{x,y\}\succ\{z,w\},\quad R^{IIS}_{\succsim_{1}}=R^{AP}_{\succsim_{1}};\\
   &R^{L}_{\succsim_{2}}:w\succ\{x,y,z\},\quad R^{SL}_{\succsim_{2}}:\{x,y,z,w\},\quad R^{CPM}_{\succsim_{2}}=X\times X;\\
   &R^{P}_{\succsim_{2}}: w\succ\{x,y,z\},\quad R^{SP}_{\succsim_{2}}: X,\quad R^{AP}_{\succsim_{2}}:\{x,y,z\}\succ w,\  R^{IIS}_{\succsim_{2}} = R^{P}_{\succsim_{2}}.
\end{align*} 
\end{example}

Let us briefly explain the basic ideas of each SRS. In comparing individuals $x$ and $y$ at given coalitional ranking $\succsim:\Sigma_{1}\succ\Sigma_{2}\succ\cdots$, lex-cel compares the frequencies of $x$ and $y$ lexicographically from the best equivalence class; S lex-cel sees only whether the individuals appear (1) or do not appear (0) in each equivalence class, comparing the indexes lexicographically; plurality (anti-plurality) compares only the frequencies at the top (bottom) equivalence class; IIS focuses on the intersection of each equivalence class---it declares that $x$ is better than $y$ if there exists $\hat{k}$ such that $x$ belongs to the intersections of $\Sigma_{1},\Sigma_{2},\cdots,\Sigma_{\hat{k}}$ simultaneously, but $y$ does not; and CP majority compares $x$ and $y$ ceteris paribus---comparing $x$'s coalitions and $y$'s coalitions with the other teammates fixed (i.e., $S\cup\{x\}$ versus $S\cup\{y\}$ for $S\subseteq X\setminus\{x,y\}$). 

Lex-cel \cite[]{Bernardi2019,Aleandri2024a,Suzuki2024,Suzuki2024c,Suzuki2025independence}, plurality \cite[]{Suzuki2025_Mill_SRS,Suzuki2025independence}, CP majority \cite[]{Haret2019a, Aleandri2024a} and IIS \cite[]{Suzuki2026Ranking,Suzuki2025independence} are found in the literature. S lex-cel (a variant of lex-cel) and S plurality (a variant of plurality) are new SRSs. Although these two SRSs---which are solely based on the presence, or absence, of individuals in the various equivalence classes, rather than their frequency---may appear rather rough, we would like to stress that in certain situations, equivalence classes may represent a benchmark for the evaluation. This is the case, for example, when ranking research teams based on their publication records: publishing at least one paper in a top journal could already be a sufficient criterion for achieving a high ranking that is deserving of a prize.

\section{Consistency}\label{axioms_consist}
For any $\succsim_{1},\succsim_{2}\in\domain$, we say that $\succsim_{1}$ and $\succsim_{2}$ are \textit{disjoint} if $\mathcal{D}\left(\succsim_{1}\right)\cap\mathcal{D}\left(\succsim_{2}\right)=\emptyset$. For any disjoint $\succsim_{1},\succsim_{2}\in\domain$, we define: 
\[\succsim_{1}+\succsim_{2}:=\{\succsim\in\mathcal{R}(\mathcal{D}(\succsim_{1})\cup\mathcal{D}(\succsim_{2})): \forall i\in\{1,2\},\succsim\mid_{\mathcal{D}(\succsim_{i})}=\succsim_{i}\}.\]
where $\succsim\mid_{\mathcal{D}(\succsim_{i})}$ denotes the restriction of $\succsim$ to the elements of $\mathcal{D}(\succsim_{i})$; i.e., $\succsim\mid_{\mathcal{D}(\succsim_{i})}\in \mathcal{R}(\mathcal{D}(\succsim_{i}))$ and $S \succsim\mid_{\mathcal{D}(\succsim_{i})} T \Leftrightarrow S \succsim_{i} T$ for all $S,T \in \mathcal{D}(\succsim_{i})$.
An element of $\succsim_{1}+\succsim_{2}$ is called a \textit{sum} of $\succsim_{1}$ and $\succsim_{2}$. Some sums are introduced as follows. 

\begin{itemize}
\item For any disjoint $\succsim_{1},\succsim_{2}\in\domain$, \textit{the concatenation sum} of $\succsim_{1}$ and $\succsim_{2}$, denoted as $\succsim=\succsim_{1}\cdot\succsim_{2}$, is the sum of $\succsim_{1}$ and $\succsim_{2}$, such that $S\succ T$ holds for all $S\in\mathcal{D}(\succsim_{1})$ and $T\in\mathcal{D}(\succsim_{2})$. 
\item For any disjoint $\succsim_{1},\succsim_{2}\in\domain$ with $\succsim_{1}:\Sigma_{1}\succ_{1}\cdots\succ_{1}\Sigma_{l}$ and $\succsim_{2}:\Gamma_{1}\succ_{2}\cdots\succ_{2}\Gamma_{m}$, \textit{the top-aligned sum} of $\succsim_{1}$ and $\succsim_{2}$, denoted as $\succsim_{1}\topsum\succsim_{2}$, is the sum of $\succsim_{1}$ and $\succsim_{2}$, such that $\Sigma_{k}$ and $\Gamma_{k}$ belong to the same equivalence class for each $k\in[\min\{l,m\}]$.
\item For any disjoint $\succsim_{1},\succsim_{2}\in\domain$ with $\succsim_{1}:\Sigma_{1}\succ_{1}\cdots\succ_{1}\Sigma_{l}$ and $\succsim_{2}:\Gamma_{1}\succ_{2}\cdots\succ_{2}\Gamma_{m}$, \textit{the bottom-aligned sum} of $\succsim_{1}$ and $\succsim_{2}$, denoted as $\succsim_{1}\bottomsum\succsim_{2}$, is the sum of $\succsim_{1}$ and $\succsim_{2}$, such that $\Sigma_{l+1-k}$ and $\Gamma_{m+1-k}$ belong to the same equivalence class for each $k\in[\min\{l,m\}]$.
\end{itemize}

It is straightforward in that for any disjoint $\succsim_{1},\succsim_{2}\in\domain$, the concatenation sum of $\succsim_{1}$ and $\succsim_{2}$, the top-aligned sum of $\succsim_{1}$ and $\succsim_{2}$, and the bottom-aligned sum of $\succsim_{1}$ and $\succsim_{2}$ all uniquely exist. Because all of them are sums of $\succsim_{1}$ and $\succsim_{2}$, they order the elements of $\mathcal{D}(\succsim_{i})$ according to $\succsim_{i}$ ($i=1,2$). Their difference is the relationship between the elements in $\mathcal{D}(\succsim_{1})$ and the elements in $\mathcal{D}(\succsim_{2})$. Under the concatenation sum $\succsim_{1}\cdot\succsim_{2}$, every element of $\mathcal{D}(\succsim_{1})$ is ranked above $\mathcal{D}(\succsim_{2})$; under the top-aligned sum $\succsim_{1}\topsum\succsim_{2}$, for each $i=1,2$ and $k=1,2,\cdots,\min\{l, m\}$, elements in the $k^{\text{th}}$ equivalence class in $\succsim_{i}$ remains in the $k^{\text{th}}$ equivalence class in $\succsim_{1}\topsum\succsim_{2}$, too (in other words, $\succsim_{1}$ and $\succsim_{2}$ are unified by aligning the positions of the top equivalence classes); under the bottom-aligned sum $\succsim_{1}\bottomsum\succsim_{2}$, the positions of the bottom equivalence classes are aligned (Example \ref{example:sums}). 
\begin{example}
\label{example:sums}
Let $\succsim_{1}:\Sigma_{1}\succ_{1}\Sigma_{2}\succ_{1}\Sigma_{3}$ and $\succsim_{2}:\Gamma_{1}\succ_{2}\Gamma_{2}$. Then, the concatenation sum  $\succsim_{3}:=\succsim_{1}\cdot\succsim_{2}$, top-aligned sum $\succsim_{4}:\succsim_{1}\topsum\succsim_{2}$, and the bottom-aligned sum $\succsim_{5}:\succsim_{1}\bottomsum\succsim_{2}$ are as follows: 
\begin{align*}
&\succsim_{3}:\Sigma_{1}\succ_{3}\Sigma_{2}\succ_{3}\Sigma_{3}\succ_{3}\Gamma_{1}\succ_{3}\Gamma_{2},\\
&\succsim_{4}:\Sigma_{1}\cup\Gamma_{1}\succ_{4}\Sigma_{2}\cup\Gamma_{2}\succ_{4}\Sigma_{3},\text{ and}\\
&\succsim_{5}:\Sigma_{1}\succ_{5}\Sigma_{2}\cup\Gamma_{1}\succ_{5}\Sigma_{3}\cup\Gamma_{2}.
\end{align*}
\end{example}

\begin{definition}
\label{definition:consistencies}
An SRS $R:\domain\rightarrow\codomain$ satisfies 
\begin{itemize}
\item \textit{consistency} (CON), 

if for any $\succsim_{1},\succsim_{2}\in\domain$, and $\succsim\in\succsim_{1}+\succsim_{2}$, and $x,y\in X$, the following relations (1)-(4) hold,
\item  \textit{concatenation consistency} (CCON), 

if for any $\succsim_{1},\succsim_{2}\in\domain$ and $\succsim:=\succsim_{1}\cdot\succsim_{2}$ and for any $x,y\in X$, the following relations (1)-(4) hold,
\item \textit{top-aligned consistency} (TCON),

if for any $\succsim_{1},\succsim_{2}\in\domain$ and $\succsim:=\succsim_{1}\topsum\succsim_{2}$ and for any $x,y\in X$, the following relations (1)-(4) hold,
\item \textit{bottom-aligned consistency} (BCON),

if for any $\succsim_{1},\succsim_{2}\in\domain$ and $\succsim:=\succsim_{1}\bottomsum\succsim_{2}$ and for any $x,y\in X$, the following relations (1)-(4) hold, 
\end{itemize}
where 
\begin{enumerate}[label=(\arabic*)]
\item if $xI_{\succsim_1}y$ and $xI_{\succsim_2}y$, then $xI_{\succsim}y$,
\item if $xI_{\succsim_1}y$ and $xP_{\succsim_2}y$, then $xP_{\succsim}y$,
\item if $xP_{\succsim_1}y$ and $xI_{\succsim_2}y$, then $xP_{\succsim}y$,
\item if $xP_{\succsim_1}y$ and $xP_{\succsim_2}y$, then $xP_{\succsim}y$.
\end{enumerate}
\end{definition}

We say that an SRS $R$ satisfies \textit{II-concatenation consistency} (\textit{II-CCON}) if $R$ satisfies the condition (1) in Definition \ref{definition:consistencies} for all disjoint $\succsim_{1},\succsim_{2}\in\domain$ with $\succsim:=\succsim_{1}\cdot\succsim_{2}$ and $x,y\in X$. Similarly, IP-concatenation consistency (IP-CCON), PI-concatenation consistency (PI-CCON), and PP-concatenation consistency (PP-CCON) are defined by conditions (2)-(4), respectively. Clearly, CCON is equivalent to the combination of II-CCON, IP-CCON, PI-CCON, and PP-CCON.

\subsection{Relations among consistency axioms}
\label{subsection:relations among consistency axioms}
The logical relationship between the three main consistencies (CON, CCON, TCON, and BCON) is summarized in Proposition \ref{proposition:relationship of consistencies} (and visualized in Fig.\ref{figure:consistencies}). 

\begin{proposition}
\label{proposition:relationship of consistencies}
Each of the following holds.  
\begin{enumerate}
\item CON implies the other three---that is, if an SRS $R$ satisfies CON, then it also satisfies CCON, TCON, and BCON.
\item Among the three consistencies axioms CCON, TCON, and BCON, S lex-cel satisfies only CCON; plurality satisfies only TCON; anti-plurality satisfies only BCON; IIS and CP majority satisfy none of them. 
\end{enumerate}
\end{proposition}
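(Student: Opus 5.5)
Part 1 needs no computation. For disjoint $\succsim_{1},\succsim_{2}$, the concatenation sum $\succsim_{1}\cdot\succsim_{2}$, the top-aligned sum $\succsim_{1}\topsum\succsim_{2}$ and the bottom-aligned sum $\succsim_{1}\bottomsum\succsim_{2}$ are all elements of $\succsim_{1}+\succsim_{2}$. So the universal quantification over $\succsim\in\succsim_{1}+\succsim_{2}$ in CON covers each of these particular sums, and conditions (1)--(4) hold for them. Hence CON implies CCON, TCON and BCON.

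For part 2, I treat each solution separately: a direct proof of each positive claim, and a small explicit counterexample for each negative one. All counterexamples use $X\supseteq\{x,y,z\}$ and singleton or two-element coalitions.

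\emph{Positive claims.}
\begin{itemize}
\item \emph{S lex-cel and CCON.} Under $\succsim_{1}\cdot\succsim_{2}$ the classes are those of $\succsim_{1}$ followed by those of $\succsim_{2}$. Hence $\dot\theta_{\succsim}(x)$ is the concatenation $(\dot\theta_{\succsim_{1}}(x),\dot\theta_{\succsim_{2}}(x))$. The lexicographic order on concatenations gives (1)--(4) immediately: equal on both parts gives equal; equal on the first part and strictly better on the second gives strictly better; strictly better on the first part gives strictly better.
\item \emph{Plurality and TCON.} Under $\succsim_{1}\topsum\succsim_{2}$ the top class is $\Sigma_{1}\cup\Gamma_{1}$, which is a disjoint union. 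Hence the top count is additive, $x_{1}=x^{(1)}_{1}+x^{(2)}_{1}$, and additivity gives (1)--(4).
\item \emph{Anti-plurality and BCON.} The argument is the same, using the bottom class $\Sigma_{l}\cup\Gamma_{m}$.
\end{itemize}

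\emph{Negative claims.} The failures come in three patterns.
\begin{itemize}
\item \emph{Additivity fails where the rule needs it.}
\begin{itemize}
\item S lex-cel fails TCON: take $\succsim_{1}:\{x\}$ and $\succsim_{2}:\{\{x,z\}\}\succ\{y\}$. Then $x P_{1} y$ and $x P_{2} y$, but in the top-aligned sum $x$ and $y$ are present in the same classes, so $x I y$.
\item S lex-cel fails BCON, by the symmetric example.
\end{itemize}
\item \emph{Counts from a lower class get merged into the top or bottom class.}
\begin{itemize}
\item Plurality fails CCON: take $\succsim_{1}:\{x\}$ and $\succsim_{2}:\{y\}\succ\{x\}$. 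Then plurality strictly prefers $x$ in $\succsim_{1}$ and $y$ in $\succsim_{2}$, so the hypotheses mix $P$ and its reverse. I would rather use $\succsim_{1}:\{x\}$ and $\succsim_{2}:\{z\}\succ\{y\}$: here $xP_{1}y$ and $xI_{2}y$, and in the concatenation sum $xPy$. That is fine, so concatenation seemingly preserves plurality, and I must pick the example where $\succsim_{1}$ gives $I$ with both absent from the top. Take $\succsim_{1}:\{z\}\succ\{x\}$ and $\succsim_{2}:\{y\}$. Then $xI_{1}y$ (both top counts are 0) and $yP_{2}x$, yet the concatenation sum has top class $\{z\}$ only, so $xIy$. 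This violates (2).
\item Plurality fails BCON: take $\succsim_{1}:\{z\}\succ\{x\}$ and $\succsim_{2}:\{y\}$. The bottom-aligned sum merges $\{x\}$ with $\{y\}$ and has top class $\{z\}$, so $xIy$ while $yP_{2}x$.
\item Anti-plurality fails CCON and TCON, by the mirror images of these two examples.
\end{itemize}
\item \emph{IIS and CP majority.}
\begin{itemize}
\item IIS fails all three sums because intersections shrink under union. Take $\succsim_{1}:\{\{x\}\}$, so $e(x)=1>e(y)=0$, and $\succsim_{2}:\{\{y\}\}$ with $y$ strictly better. The intersection of the merged top class is then empty, so $xIy$, violating the reverse direction. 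Better, take $\succsim_{2}:\{\{x,y\}\}$, so $xI_{2}y$ with $e=1$ for both. For concatenation, $e(x)=2$ and $e(y)=0$, so I instead need a different $\succsim_{2}$, e.g.\ $\succsim_{2}:\{\{z\}\}$. Then the union destroys $T_{1}$ and gives $xIy$ although $xP_{1}y$ and $xI_{2}y$. This single example works for the top-aligned sum. For the concatenation and bottom-aligned sums I would prefix a common top class that keeps both $x$ and $y$ away from $T$ and check them similarly.
\item CP majority fails all three: take $\succsim_{1}:\{x\}\succ\{y\}$, so $xP_{1}y$, and $\succsim_{2}:\{\{x,z\}\}\succ\{\{y,z\}\}$, so $xP_{2}y$. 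Every sum keeps these CP comparisons, which looks consistent. So instead I use cross-domain comparisons: $\succsim_{1}:\{\{x\}\}$ and $\succsim_{2}:\{\{y\}\}$ give $I$ in both, but a sum comparing $\{x\}$ with $\{y\}$ strictly gives $P$. Concatenation gives $\{x\}\succ\{y\}$, and the bottom-aligned sum can be made strict by lengthening $\succsim_{2}$. For the top-aligned sum, use $\succsim_{1}:\{x\}$ and $\succsim_{2}:\{z\}\succ\{y\}$.
\end{itemize}
\end{itemize}

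The main obstacle is bookkeeping: choosing, for each negative cell, an instance where the hypotheses of one of (1)--(4) really hold and the sum's structure flips the conclusion. I would verify each instance directly against the definitions.
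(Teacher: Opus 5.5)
Part 1, your positive claims, and your failures for plurality, anti-plurality and CP majority are sound. One detail: in the bottom-aligned CP case the added class must sit \emph{below} $\{y\}$, e.g.\ $\succsim_{2}:\{\{y\}\}\succ\{\{z\}\}$, which yields $yPx$ from two indifferences. Two cells, however, are not established.

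First, your S lex-cel counterexample for TCON is miscomputed. The top-aligned sum of $\{\{x\}\}$ and $\{\{x,z\}\}\succ\{\{y\}\}$ is $\{\{x\},\{x,z\}\}\succ\{\{y\}\}$. So $\dot{\theta}(x)=(1,0)>^{L}(0,1)=\dot{\theta}(y)$ and $xPy$, exactly as (4) requires. Thus $x$ and $y$ are not present in the same classes, and your ``symmetric'' BCON example fails the same way. The mechanism you name (presence is not additive) needs a merged class in which $y$ becomes present alongside $x$. The paper uses $\succsim_{1}:\{\{x,y\}\}$ and $\succsim_{2}:\{\{x\}\}$. Here $xI_{\succsim_{1}}y$ and $xP_{\succsim_{2}}y$, while both aligned sums are the single class $\{\{x,y\},\{x\}\}$ with $\dot{\theta}(x)=\dot{\theta}(y)=(1)$. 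This violates (2) for TCON and BCON at once.

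Second, for IIS you settle only the aligned sums. Your pair $\{\{x\}\}$, $\{\{z\}\}$ already handles the bottom-aligned sum, since both aligned sums are the single class $\{\{x\},\{z\}\}$. The concatenation case is deferred, and the fix you suggest cannot work. For $k$ up to the length $l$ of $\succsim_{1}$, the sets $T_{k}$ of $\succsim_{1}\cdot\succsim_{2}$ coincide with those of $\succsim_{1}$. Hence $e_{\succsim_{1}}(x)>e_{\succsim_{1}}(y)$ always forces $e(x)>e(y)$ in the concatenation, so IIS never violates (3) or (4) under concatenation. Condition (1) also holds. Prefixing to $\succsim_{1}$ a class whose intersection excludes $x$ and $y$ merely turns your instance into a pair of indifferences. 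The strict preference must come from the lower ranking. With $\succsim_{1}:\{\{z\}\}$ and $\succsim_{2}:\{\{x\}\}$ one has $xI_{\succsim_{1}}y$ and $xP_{\succsim_{2}}y$. But in $\{\{z\}\}\succ\{\{x\}\}$ we get $T_{1}=\{z\}$ and $e(x)=e(y)=0$. This is the paper's example, and it also covers both aligned sums.
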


\begin{proof}
(1) is straightforward by the definition of the axioms. We will prove (2). Let $x,y,z\in X$ be distinct elements. On \textbf{S lex-cel}: it clearly satisfies CCON. Let $\succsim_{1}:\{\{x,y\}\}$ and $\succsim_{2}:\{\{x\}\}$. Then, we have $xI^{SL}_{\succsim_{1}}y$, $xP^{SL}_{\succsim_{2}}y$, and $xI^{SL}_{\succsim}y$ for $\succsim$ equals to $\succsim_{1}\topsum\succsim_{2}$ and $\succsim_{1}\bottomsum\succsim_{2}$, respectively. This contradicts TCON and BCON. On \textbf{plurality}: it clearly satisfies TCON. Let $\succsim_{1}:\{\{x,y\}\}\succ_{1}\{\{z\}\}$ and $\succsim_{2}:\{\{x\}\}$. Then, we have $xI^{P}_{\succsim_{1}}y$, $xP^{P}_{\succsim_{2}}y$, and $xI^{P}_{\succsim}y$ for $\succsim$ equals to $={\succsim_{1}\cdot\succsim_{2}}$ and $\succsim_{1}\bottomsum\succsim_{2}$, respectively. So, plurality violates CCON and BCON. \textbf{\textbf{Anti-plurality}} can also be verified in a dual way.  On \textbf{IIS}: let $\succsim_{1}:\{\{z\}\}$ and $\succsim_{2}:\{\{x\}\}$. Then, we have $xI^{IIS}_{\succsim_{1}}y$, $xP^{IIS}_{\succsim_{2}}y$, and $xI^{IIS}_{\succsim}y$ for $\succsim$ equals to $\succsim_{1}\cdot\succsim_{2}$, $\succsim_{1}\topsum\succsim_{2}$, and $\succsim_{1}\bottomsum\succsim_{2}$, respectively. Thus, IIS violates all three consistencies. On \textbf{CP majority}: let $\succsim_{1}:\{\{x\}\}\succ_{1}\{\{x,y,z\}\}$ and $\succsim_{2}:\{\{z\}\}\succ_{2}\{\{y\}\}$. Then, we find that $xI^{CPM}_{\succsim_{1}}y$, $xI^{CPM}_{\succsim_{2}}y$, and $xP^{CPM}_{\succsim}y$ for $\succsim$ are equal to  $\succsim_{1}\cdot\succsim_{2}$, $\succsim_{1}\topsum\succsim_{2}$, and $\succsim_{1}\bottomsum\succsim_{2}$, respectively. This proves that the CP majority fails to satisfy any one of the three consistencies.
\end{proof}

\begin{figure}[ht]
\centering
\includegraphics[width=0.6\linewidth]{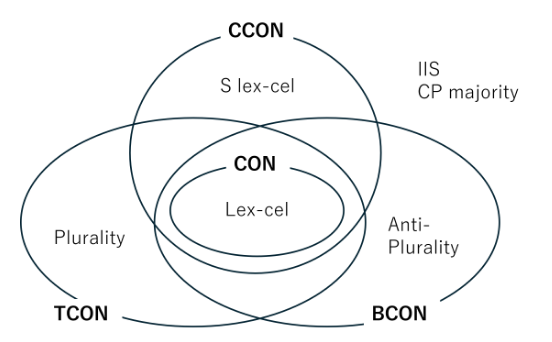}
\caption{Consistencies and SRSs}
\label{figure:consistencies}
\end{figure}


    
    
    

Figure \ref{figure:consistencies} visualizes the result stated in Proposition \ref{proposition:relationship of consistencies}. It is worth noting that each lex-cel, S lex-cel, plurality, anti-plurality, and IIS (or CP majority) is located in the different areas in the Venn Diagram. This indicates that the three consistency axioms (CCON, TCON, and BCON), which are introduced in the present paper, successfully grasp the different characteristics of the familiar SRSs. 

It is also noteworthy that many of the existing SRSs---plurality, anti-plurality, IIS, and CP majority---do not satisfy CCON; however, this is due to different reasons. To see this, let us further look at which of (1)-(4) in Definition \ref{definition:consistencies} are violated by these SRSs. 

\begin{proposition}
\label{proposition:concatenation consistencies}
Each of the following holds. 
\begin{enumerate}
\item IIS and plurality satisfy II-CCON, PI-CCON, and PP-CCON (but not IP-CCON). 
\item Dual IIS and anti-plurality satisfy II-CCON, IP-CCON, and PP-CCON (but not PI-CCON). 
\item If $\lvert X\rvert\geq 4$, CP majority satisfies none of II-CCON, IP-CCON, PI-CCON, and PP-CCON.  
\end{enumerate}
\end{proposition}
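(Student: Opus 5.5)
For (1) and (2) I would compute directly how plurality and IIS behave under a concatenation sum. For (3) I would build explicit counterexamples that use a fourth player. Write $\succsim_{1}:\Sigma_{1}\succ\cdots\succ\Sigma_{l}$ and $\succsim_{2}:\Gamma_{1}\succ\cdots\succ\Gamma_{m}$. Then $\succsim=\succsim_{1}\cdot\succsim_{2}$ has the equivalence classes $\Sigma_{1},\dots,\Sigma_{l},\Gamma_{1},\dots,\Gamma_{m}$ in this order.

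\emph{Plurality.} The top class of $\succsim$ is $\Sigma_{1}$, so $R^{P}_{\succsim}=R^{P}_{\succsim_{1}}$. Hence conditions (1), (3) and (4) of Definition \ref{definition:consistencies} hold trivially, since in each the conclusion depends only on the $\succsim_{1}$ comparison. The failure of IP-CCON is exactly the counterexample in the proof of Proposition \ref{proposition:relationship of consistencies}.

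\emph{IIS.} Let $T^{i}_{k}$ denote the sets $T_{k}$ computed for $\succsim_{i}$. For $k\le l$ the set $T_{k}$ computed for $\succsim$ equals $T^{1}_{k}$. For $k=l+j$ it equals $T^{1}_{l}\cap T^{2}_{j}$. It follows that
\[
e_{\succsim}(x)=\begin{cases} e_{\succsim_{1}}(x) & \text{if } e_{\succsim_{1}}(x)<l,\\ l+e_{\succsim_{2}}(x) & \text{if } e_{\succsim_{1}}(x)=l. \end{cases}
\]
\textbf{II case.} If $e_{\succsim_{1}}(x)=e_{\succsim_{1}}(y)<l$, then $e_{\succsim}(x)=e_{\succsim}(y)$. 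If both equal $l$, the comparison reduces to $e_{\succsim_{2}}$, which gives equality as well.
\textbf{PI and PP cases.} Here $e_{\succsim_{1}}(x)>e_{\succsim_{1}}(y)$ forces $e_{\succsim_{1}}(y)<l$. Hence $e_{\succsim}(y)=e_{\succsim_{1}}(y)<e_{\succsim_{1}}(x)\le e_{\succsim}(x)$.
\textbf{IP case.} It fails when $e_{\succsim_{1}}(x)=e_{\succsim_{1}}(y)<l$, as in the example with $\succsim_{1}:\{\{z\}\}$ from Proposition \ref{proposition:relationship of consistencies}.

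For (2) I would argue by duality. Dual IIS and anti-plurality are obtained from IIS and plurality by reading the coalitional ranking from the bottom and reversing the resulting ranking of individuals. Under this reversal the bottom part of $\succsim_{1}\cdot\succsim_{2}$ comes from $\succsim_{2}$, so the roles of $\succsim_{1}$ and $\succsim_{2}$ are exchanged. The same computation therefore gives II, IP and PP, with PI failing on the mirrored example.

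For (3), the II case is the CP majority example already given in Proposition \ref{proposition:relationship of consistencies}. The mechanism behind all failures is the set of \emph{cross comparisons}: pairs $S\cup\{x\}$, $S\cup\{y\}$ with one coalition in $\mathcal{D}(\succsim_{1})$ and the other in $\mathcal{D}(\succsim_{2})$. Under concatenation each such pair counts for whichever player's coalition lies in $\mathcal{D}(\succsim_{1})$. I would proceed in three steps.
\begin{enumerate}
\item Give $\succsim_{i}$ a small prescribed balance of internal CP comparisons: equal counts for I, one net comparison in favour of $x$ for P.
\item Place coalitions $\{y\}\cup S$ in $\mathcal{D}(\succsim_{1})$ and the matching coalitions $\{x\}\cup S$ in $\mathcal{D}(\succsim_{2})$, for several $S\subseteq\{z,w\}$. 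These generate cross comparisons in favour of $y$ that outnumber the internal advantage of $x$.
\item Rank these added coalitions inside each $\succsim_{i}$ so that they are all tied or are never compared ceteris paribus with one another. Then they do not alter the I/P pattern within $\succsim_{1}$ or $\succsim_{2}$.
\end{enumerate}
This produces failures of IP-CCON and PP-CCON, where the required strict preference for $x$ is lost or even reversed, and a symmetric construction handles PI-CCON. The fourth player $w$ is needed to supply enough distinct teammate sets $S$ for the cross comparisons to dominate.

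I expect the main obstacle to be this bookkeeping in (3): every ceteris paribus pair within each $\succsim_{i}$ must be checked so that the intended $xI_{\succsim_{i}}y$ or $xP_{\succsim_{i}}y$ really holds. I would first try the smallest domains, with $S\in\{\emptyset,\{z\},\{w\},\{z,w\}\}$, and adjust them by hand.
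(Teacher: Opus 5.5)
Your parts (1) and (2) are correct, and more explicit than the paper, which only calls them straightforward. Three steps all check out:
\begin{itemize}
\item the identity $R^{P}_{\succsim_{1}\cdot\succsim_{2}}=R^{P}_{\succsim_{1}}$;
\item your closed formula for $e_{\succsim_{1}\cdot\succsim_{2}}$;
\item the observation that reversing the coalitional ranking sends $\succsim_{1}\cdot\succsim_{2}$ to $\succsim_{2}^{\mathrm{rev}}\cdot\succsim_{1}^{\mathrm{rev}}$, so IP and PI are exchanged.
\end{itemize}
The third holds under the natural reading of dual IIS, which the paper never defines. The gap is in (3). Apart from II-CCON, where you reuse the paper's example, you give a recipe rather than counterexamples. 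For non-satisfaction claims the counterexamples are the entire proof. Your route is the paper's: explicit profiles whose effect comes from cross pairs. The bookkeeping you expect to be the obstacle is empty. For the pair $(x,y)$, the only ceteris paribus partner of $\{y\}\cup S$ is $\{x\}\cup S$. So once the teammate sets used for cross pairs differ from those used internally, your step 3 holds automatically.

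Concretely, count the empty teammate set, as the paper's examples do even though its displayed formula writes $S\in\mathfrak{X}$.
\begin{itemize}
\item \textbf{IP.} Take $\succsim_{1}:\{\{y\}\}$ and $\succsim_{2}:\{\{x\},\{x,z\}\}\succ_{2}\{\{y,z\}\}$. Then $xI^{CPM}_{\succsim_{1}}y$, and $xP^{CPM}_{\succsim_{2}}y$ via $S=\{z\}$. In $\succsim_{1}\cdot\succsim_{2}$ the cross pair $\{y\}\succ\{x\}$ gives $C_{xy}=C_{yx}=1$.
\item \textbf{PI.} Take $\succsim_{1}:\{\{x\}\}\succ_{1}\{\{y\},\{y,z\}\}$ and $\succsim_{2}:\{\{x,z\}\}$. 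Then $xP^{CPM}_{\succsim_{1}}y$ via $S=\emptyset$ and $xI^{CPM}_{\succsim_{2}}y$. The cross pair $\{y,z\}\succ\{x,z\}$ again produces a tie.
\item \textbf{PP.} Take $\succsim_{1}:\{\{x\}\}\succ_{1}\{\{y\},\{y,w\},\{y,z,w\}\}$ and $\succsim_{2}:\{\{x,z\}\}\succ_{2}\{\{y,z\},\{x,w\},\{x,z,w\}\}$. The internal pairs $S=\emptyset$ and $S=\{z\}$ favour $x$. The cross pairs $S=\{w\}$ and $S=\{z,w\}$ favour $y$. So $C_{xy}=C_{yx}=2$.
\end{itemize}

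Only PP needs $w$. With three players PP-CCON actually holds for CP majority, since both teammate sets are used up internally. With four players the cross pairs can only match $x$'s internal advantage, not outnumber it; the resulting tie suffices.

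These are the paper's own examples with two slips repaired. As printed, its IP example uses $\succsim_{1}:\{\{z\}\}$, which produces no cross pair and gives $xP^{CPM}_{\succsim_{1}\cdot\succsim_{2}}y$. Its PP example lists $\{y,z\}$ in both domains, so the two rankings are not disjoint. Your cross-pair mechanism is exactly what identifies the correct versions.
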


\begin{proof}
(1) and (2) are straightforward (and partly overlap with the proof of Proposition \ref{proposition:relationship of consistencies}). Hence, we will only prove (3). II-CCON is done in the proof of Proposition \ref{proposition:relationship of consistencies}. Let $x,y,z,w\in X$ be distinct elements. \textbf{IP-CCON}: let $\succsim_{1}:\{\{z\}\}$ and $\succsim_{2}:\{\{x,z\},\{x\}\}\succ_{2}\{\{y,z\}\}$. Then, we can observe that $xI^{CPM}_{\succsim_{1}}y$, $xP^{CPM}_{\succsim_{2}}y$, and $xI^{CPM}_{\succsim_{1}\cdot\succsim_{2}}y$. So, CP majority violates IP-CCON. \textbf{PI-CCON} is also violated for similar reasons. \textbf{PP-CCON}: let $\succsim_{1}:\{\{x\}\}\succ_{1}\{\{y\},\{y,w\},\{y,z,w\},$ $\{y,z\}\}$ and $\succsim_{2}:\{\{x,z\}\}\succ_{2}\{\{y,z\},\{x,w\},\{x,z,w\}\}$. Then, we have $xP^{CPM}_{\succsim_{1}}y$, $xP^{CPM}_{\succsim_{2}}y$, and $xI^{CPM}_{\succsim_{1}\cdot\succsim_{2}}y$. This contradicts PP-CCON. 
\end{proof}

Figure \ref{figure:concatenation consistencies} summarizes Proposition \ref{proposition:concatenation consistencies}. While the previous Figure \ref{figure:consistencies} demonstrates that many of the existing SRSs (plurality, CP majority, IIS, etc.) fail to meet the full demand of CCON, Figure \ref{figure:concatenation consistencies} shows the extent to which each SRS satisfies/fails to satisfy the CCON. An important fact from this result is that II-CCON and PP-CCON are satisfied by all IIS, plurality, S lex-cel, lex-cel, and anti-plurality. In other words, if the social ranking between $x$ and $y$ are the same in two disjoint coalitional rankings $\succsim_{1}$ and $\succsim_{2}$ (i.e., $x$ and $y$ are judged indifferent in both $\succsim_{1}$ and $\succsim_{2}$, or $x$[$y$] is judged better than $y$[$x$] in both $\succsim_{1}$ and $\succsim_{2}$), then it must also be in the concatenated sum $\succsim_{1}\cdot\succsim_{2}$. Therefore, non-consistent results (violation of CCON) can occur only when the social ranking between two individuals $x$ and $y$ differ between the two coalitional rankings $\succsim_{1}$ and $\succsim_{2}$. 

\begin{figure}[ht]
\centering
\includegraphics[width=0.5\linewidth]{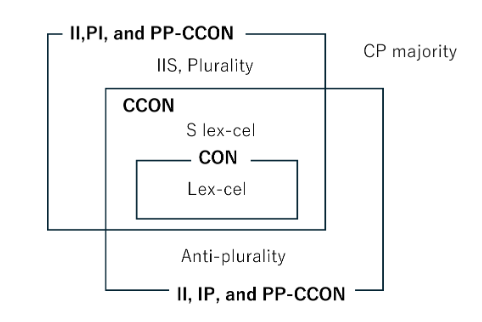}
\caption{Concatenation consistencies and SRSs}
\label{figure:concatenation consistencies}
\end{figure}


    
    
    




\section{Complementary Axioms}\label{sec:compaxioms}

In this section, we introduce some complementary axioms that will be useful later to axiomatically characterize the SRSs introduced in the previous sections using the different notions of consistency. The first two axioms address notions of invariance, either with respect to permutations of players or to permutations of coalitions.
Before formally introducing these axioms, we need to first introduce some preliminary notation. 
A bijection $\sigma:X\rightarrow X$ is said to be a permutation $\sigma$ on $X$. For any permutation $\sigma$ on $X$ and any relation $\succsim \in \domain$, we denote as $\succsim^\sigma \in \domain$ the relation such that $S \succsim T \Leftrightarrow \sigma(S)  \succsim^\sigma \sigma(T)$, where $\sigma(S)=\{\sigma(x): x\in S\}$.
A bijection $\pi:\mathfrak{X}\rightarrow \mathfrak{X}$ is said to be a permutation $\pi$ on $\mathfrak{X}$. For any $x \in X$, a permutation $\pi$ on $\mathfrak{X}$ is seen as $x$-\emph{invariant} if for any $S \in \mathfrak{X}$ with $x \in S$ we have that $x \in \pi(S)$; it is said $\{x,y\}$-\emph{invariant} if it is both $x$-invariant and $y$-invariant, for any $x,y \in X$. For any permutation $\pi$ on $\mathfrak{X}$ and any relation $\succsim \in \domain$, we denote as $\succsim_\pi \in \domain$ the relation such that $S \succsim T \Leftrightarrow \pi(S) \succsim_\pi \pi(T)$, for all $S,T\in\mathfrak{X}$.

\begin{definition}
\label{definition:NT and WCA}
An SRS $R:\domain\rightarrow\codomain$ satisfies
\begin{itemize}
\item \textit{neutrality} (\textit{NT}) if, for any $\succsim\in\domain$ and any permutation $\sigma$ on $X$, we have that $(R_{\succsim})^\sigma=R_{\succsim^{\sigma}}$, where $(R_{\succsim})^\sigma=\{(\sigma(x),\sigma(y))\mid (x,y)\in R_{\succsim})\}$.

\item \textit{weak coalitional anonymity} (\textit{WCA}) if, for any $x,y\in X$ and for any $\{x,y\}$-invariant permutation $\pi$ on $\mathfrak{X}$, we have $R_{\succsim}\mid_{\{x,y\}}=R_{\succsim_\pi
}\mid_{\{x,y\}}$ for all $\succsim\in\domain$.
\end{itemize}
\end{definition}

NT axiom is a very basic requirement that states that an SRS should not depend on the name of individuals in $X$ and has been widely used in the literature \cite{Bernardi2019, Algaba2021}. WCA axiom is also a classical property for SRS \cite{Algaba2021, Suzuki2024} and demands to rank individual parties solely according to the positions of the coalitions they belong to, regardless of the coalitions’ identities or sizes.
{A second family of axioms is aimed at preserving the independence of the social ranking across coalitional rankings that only differ for a modification of the position of certain coalitions. In particular, the following two mimic the definition of the {\it Independence of the Worst Set} (IWS) \cite{Bernardi2019,Algaba2021,Aleandri2024a,Suzuki2026Ranking}, that has been used to axiomatically characterize the lex-cel solution over a domain of complete coalitional rankings.

\begin{definition}
\label{definition:IDWS and IAWS}
An SRS $R:\domain\rightarrow\codomain$ satisfies:
\begin{enumerate}
\item \textit{independence of the decomposition of the worst sets} (\textit{IDWS}) if, for any $\succsim:\Sigma_{1}\succ\cdots\succ\Sigma_{l}$, where $l\geq 2$, and $\succsim':\Sigma_{1}\succ'\cdots\succ'\Sigma_{l-1}\succ'\Gamma_{1}\succ'\cdots\succ'\Gamma_{m}$ with $\Sigma_{l}=\Gamma_{1}\cup\ldots\cup\Gamma_{m}$ and $\Gamma_{i}\cap\Gamma_{j}=\emptyset$, for all $i,j\in[m]$, and, for any $x,y\in X$, we have that $\big[\ xP_{\succsim}y\Rightarrow xP_{\succsim'}y\ \big]$.
\item \textit{independence of the addition of the worst sets} (\textit{IAWS}) if for any $\succsim:\Sigma_{1}\succ\cdots\succ\Sigma_{l}$ and $\succsim':\Sigma_{1}\succ'\cdots\succ'\Sigma_{l}\succ'\Gamma$ with $\Gamma\subseteq\mathfrak{X}\setminus\mathcal{D}(\succsim)$, and for any $x,y\in X$, we have that $\big[\ xP_{\succsim}y\Rightarrow xP_{\succsim'}y\ \big]$.
\end{enumerate}
\end{definition}

Following the standard interpretation of IWS for complete coalitional rankings, IDWS can be interpreted as a principle designed to reward the excellence of coalitions when the domain is variable ($\codomain$). In fact, property states that if the positions of coalitions within the worst equivalence class of a coalitional ranking are modified, as long as they remain dominated by the other coalitions in $\mathcal{D}$, then the strict ranking of the individuals should remain unaffected.
However, it is noteworthy that in our model, where the set of all feasible coalitions can vary, lex-cel does not satisfy IDWS, as shown in the following proposition. 

\begin{proposition}
\label{proposition:lex-cel_IWS}
In the setting of a variable domain for coalitions, lex-cel does not satisfy IDWS.
\end{proposition}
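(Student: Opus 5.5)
We will prove the statement with an explicit counterexample: a coalitional ranking $\succsim$ with $l=2$ equivalence classes, and a refinement $\succsim'$ obtained by splitting the worst class $\Sigma_{2}$. The mechanism is that lex-cel compares frequencies lexicographically. Splitting the worst class creates new, finer positions. In the new first position of the former worst class, an individual with fewer total appearances in $\Sigma_{l}$ can now come first. So a strict preference based on the total count in $\Sigma_{l}$ can be reversed.

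Concretely, fix distinct $x,y,z\in X$ (possible since $|X|\geq 3$) and take
\[
\succsim:\ \{\{z\}\}\succ\{\{x\},\{y\},\{y,z\}\},
\]
so $l=2$, $\Sigma_{1}=\{\{z\}\}$ and $\Sigma_{2}=\{\{x\},\{y\},\{y,z\}\}$. Then $\theta_{\succsim}(x)=(0,1)$ and $\theta_{\succsim}(y)=(0,2)$, hence $yP^{L}_{\succsim}x$. Now decompose $\Sigma_{2}$ into the disjoint sets $\Gamma_{1}=\{\{x\}\}$ and $\Gamma_{2}=\{\{y\},\{y,z\}\}$, which gives
\[
\succsim':\ \{\{z\}\}\succ'\{\{x\}\}\succ'\{\{y\},\{y,z\}\}.
\]
We compute $\theta_{\succsim'}(x)=(0,1,0)$ and $\theta_{\succsim'}(y)=(0,0,2)$, so $xP^{L}_{\succsim'}y$. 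In particular $yP^{L}_{\succsim'}x$ fails.

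This pair $(\succsim,\succsim')$ meets every hypothesis of IDWS: $l\geq 2$, the first $l-1$ classes are unchanged, and $\Sigma_{l}=\Gamma_{1}\cup\Gamma_{2}$ with $\Gamma_{1}\cap\Gamma_{2}=\emptyset$. Yet the strict relation $yP_{\succsim}x$ is not preserved, so lex-cel violates IDWS.

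The only delicate point is checking that the example is legitimate in the variable-domain setting. All listed coalitions must be distinct nonempty subsets of $X$, and both relations must be weak orders on the same domain $\mathcal{D}$; both facts are immediate here. The remaining work is the routine computation of the $\theta$ vectors above.
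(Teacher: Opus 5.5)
Your proof is correct and takes essentially the same approach as the paper: it uses the same counterexample up to relabeling ($1\mapsto y$, $2\mapsto x$, $3\mapsto z$), namely $\{\{3\}\}\succ\{\{1\},\{1,3\},\{2\}\}$ refined to $\{\{3\}\}\succ'\{\{2\}\}\succ'\{\{1\},\{1,3\}\}$, which turns $1P^{L}2$ into $2P^{L}1$. Your computations of the $\theta$ vectors and your check of the IDWS hypotheses are accurate.
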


\begin{proof}
Let 
\begin{align*}
&\succsim:\{\{3\}\}\succ\{\{1\},\{1,3\},\{2\}\},\\
&\succsim':\{\{3\}\}\succ'\{\{2\}\}\succ'\{\{1\},\{1,3\}\}.
\end{align*}
Then, we have $1P_{\succsim}^L2$ and $2P_{\succsim'}^L1$. This contradicts IDWS (as $\succsim'$ is obtained from $\succsim$ by decomposing the worst class). 
\end{proof}
Unlike IDWS, IAWS does not extend the IWS property to the variable domains; however, it still incorporates a similar principle, highlighting the irrelevance of newly formed coalitions occupying the worst positions in a coalitional ranking when determining a strict social relation.
It is straightforward to show that some familiar SRSs including lex-cel (such as S lex-cel $R^{SL}$, IIS $R^{IIS}$, and plurality $R^{P}$) all satisfy IAWS. 

An even stronger property rewarding individuals' position in excellent coalitions is the following property, stating that only the top equivalence class of coalitional ranking should be considered for social ranking.
\begin{definition}
An SRS $R:\domain\rightarrow\codomain$ satisfies \textit{tops-only} (TO) if, for any $\succsim,\succsim'\in\domain$ with $\succsim:\Sigma_{1}\succ\cdots\succ\Sigma_{l}$ and $\succsim':\Sigma_{1}'\succ'\cdots\succ'\Sigma_{l}'$, and $\Sigma_{1}=\Sigma_{1}'$, then, it follows that $R_{\succsim}=R_{\succsim'}$.
\end{definition}

Clearly, TO is logically stronger than IWAS and IWDS: an SRS that satisfies TO also satisfies IWAS and IWDS, as it only depends on top coalitions---no matter which coalitions belong to the other equivalence classes or how they are ranked.

We conclude this section with a third class of axioms that specify how to socially rank individuals on particular coalitional ranking. The first axiom in this family is the axiom named \textit{all indifference all winners}  in \cite{Konieczny2022}, which is a kind of ``unanimity'' property for SRSs specifying that if all coalitions are indifferent, then all individuals should be considered equally important. Here, we generalize this axiom to our framework with a variable domain of coalitions, so that only individuals appearing in at least one coalition in $\mathcal{D}(\succsim)$ are considered equally important.
\begin{definition}
\label{definition:all indifference all winners}
An SRS $R:\domain\rightarrow\codomain$ satisfies \textit{all indifference all winners} (\textit{AIAW}) if, for any $\succsim\in\domain$ such that $S\sim T$ for all $S,T\in \mathcal{D}(\succsim)$,
$x I_{\succsim} y$ and $x P_{\succsim} z$, for all $x,y\in\bigcup \mathcal{D}(\succsim)$ and $z\notin\bigcup \mathcal{D}(\succsim)$.
\end{definition}

The last complementary axiom of this section is inspired by the theory of simple games. In fact, it deals with situations where $\succsim\in\domain$ partitions a set of coalitions in two equivalence classes $\Sigma_{1}\succ\Sigma_{2}$, so that $\Sigma_{1}$ could represent winning coalitions and $\Sigma_{2}$ losing coalitions. The \textit{Weak union very important person property} says that individuals belonging to some winning coalitions should be ranked strictly better than individuals that never take part in any winning coalitions---this is in line with the what is also suggested by marginal-type power indices \cite{dubey1981value}. In the context of social ranking, a property with a similar interpretation was also studied in \cite{Suzuki2026Ranking}.}

\begin{definition}
An SRS $R:\domain\rightarrow\codomain$ satisfies \textit{Weak union very important person property} (\textit{WUVIP}) if, 
for any  ranking $\succsim\in\domain$ with $\succsim:\Sigma_{1}\succ\Sigma_{2}$, and, for any $x,y\in X$, such that $x\in\bigcup\Sigma_{1}$ and $y\notin\bigcup\Sigma_{1}$, then $xP_{\succsim}y$.
\end{definition}
There are logical dependencies between some of the axioms introduced so far, as shown in the following proposition.
\begin{proposition}
 Let $R$ be an SRS satisfying CCON, AIAW, and IDWS. Then, it also satisfies WUVIP.   
\end{proposition}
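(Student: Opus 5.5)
The plan is to take an arbitrary $\succsim:\Sigma_{1}\succ\Sigma_{2}$ with $x\in\bigcup\Sigma_{1}$ and $y\notin\bigcup\Sigma_{1}$ and to view it as a concatenation. Put $\succsim_{1}:\Sigma_{1}$ and $\succsim_{2}:\Sigma_{2}$, each a single equivalence class. These are disjoint and $\succsim=\succsim_{1}\cdot\succsim_{2}$. By AIAW applied to $\succsim_{1}$ we get $xP_{\succsim_{1}}y$, since $x\in\bigcup\mathcal{D}(\succsim_{1})$ and $y\notin\bigcup\mathcal{D}(\succsim_{1})$. What happens next depends on where $x$ and $y$ sit with respect to $\bigcup\Sigma_{2}$, so I split into cases.

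\textbf{Easy cases.} If $x,y\in\bigcup\Sigma_{2}$, then AIAW gives $xI_{\succsim_{2}}y$, and condition (3) of CCON yields $xP_{\succsim}y$. If $x\in\bigcup\Sigma_{2}$ and $y\notin\bigcup\Sigma_{2}$, then AIAW gives $xP_{\succsim_{2}}y$, and condition (4) yields $xP_{\succsim}y$.

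\textbf{Remaining cases.} The first is $x,y\notin\bigcup\Sigma_{2}$. Here AIAW says nothing about the pair, so $\succsim_{2}$ may declare anything between $x$ and $y$. The second is $x\notin\bigcup\Sigma_{2}$ and $y\in\bigcup\Sigma_{2}$, where $yP_{\succsim_{2}}x$ and CCON is silent. This is where IDWS must enter. IDWS only transfers strict relations from a coarser ranking to one obtained by refining its worst class. So I would use it in contrapositive form: refine $\Sigma_{2}$ into $\Gamma_{1}\succ'\cdots\succ'\Gamma_{m}$, chosen so that $\succsim'$ is itself an iterated concatenation of single-class rankings. A natural first attempt is to make each $\Gamma_{j}$ a singleton. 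For such $\succsim'$, CCON together with AIAW should give $xP_{\succsim'}y$. If instead $yP_{\succsim}x$ held, IDWS would force $yP_{\succsim'}x$, which is a contradiction. Hence $xR_{\succsim}y$.

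\textbf{Main obstacle.} The hard step is computing $xP_{\succsim'}y$ on the refined ranking and then excluding $xI_{\succsim}y$, which IDWS does not control directly. For the refined ranking, the concatenation is $\succsim_{1}\cdot(\Gamma_{1}\cdot\ldots\cdot\Gamma_{m})$. The last factor is again problematic when $y$ occurs in some $\Gamma_{j}$ and $x$ does not. So I would rather choose the refinement so that the relation between $x$ and $y$ in the tail is $I$. One way is to isolate, in a class containing $x$, a coalition that can be paired with those containing $y$. A cleaner option is to apply CCON to $\succsim_{1}$ split as well.

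To rule out indifference, I would try the following. Suppose $xI_{\succsim}y$. Build a concatenation $\succsim_{0}\cdot\succsim$ in which $\succsim_{0}$ is a single class on coalitions outside $\mathcal{D}(\succsim)$ containing both $x$ and $y$, so that AIAW gives $xI_{\succsim_{0}}y$. By condition (1), $xI_{\succsim_{0}\cdot\succsim}y$. Then compare with a different decomposition of the same concatenated ranking, obtained by reading it as a coarsening whose worst-class refinement is controlled, in which CCON with AIAW forces $xP$. IDWS would then yield a contradiction. I expect the precise choice of this auxiliary decomposition, which must respect disjointness and the fact that $|X|\geq 3$, to be the delicate part of the proof.
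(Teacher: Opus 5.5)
Your two easy cases are correct. The remaining cases, which are exactly where IDWS is needed, are left open, and the routes you sketch do not close them.

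\textbf{Refining $\Sigma_{2}$ cannot help.} A refinement does not change which of $x,y$ occur in the worst part. If $y\in\bigcup\Sigma_{2}$ and $x\notin\bigcup\Sigma_{2}$, every class $\Gamma_{j}$ containing a coalition with $y$ gives $yP_{\Gamma_{j}}x$ by AIAW. This conflicts with $xP_{\succsim_{1}}y$, and none of (1)--(4) applies to opposite strict relations. If $x,y\notin\bigcup\Sigma_{2}$, AIAW says nothing about any $\Gamma_{j}$. So deriving $xP_{\succsim'}y$ for $\succsim':\Sigma_{1}\succ'\Gamma_{1}\succ'\cdots\succ'\Gamma_{m}$ is no easier than the original claim.

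\textbf{The prefix idea works only in one subcase.} It can be completed when $x,y\notin\bigcup\Sigma_{2}$. Take $\Sigma_{0}=\{\{x,y\}\}$. AIAW and condition (2) give $xP$ on $\Sigma_{0}\succ\Sigma_{1}\cup\Sigma_{2}$. IDWS transfers this to $\Sigma_{0}\succ\Sigma_{1}\succ\Sigma_{2}$, and CCON then forces $xP_{\succsim}y$. It fails, however, when $y\in\bigcup\Sigma_{2}$ and $x\notin\bigcup\Sigma_{2}$. There the only coarsening available to IDWS merges $\Sigma_{1}\cup\Sigma_{2}$, whose union contains both $x$ and $y$, so AIAW yields only indifference.

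\textbf{The missing idea.} Put the auxiliary coalition at the bottom, inside the worst class, rather than on top.
\begin{itemize}
\item If some coalition containing $\{x,y\}$ lies in $\mathcal{D}(\succsim)$, it must lie in $\Sigma_{2}$, and you are in your first easy case.
\item Otherwise $S=\{x,y\}\notin\mathcal{D}(\succsim)$. Then $x,y\in\bigcup(\Sigma_{2}\cup\{S\})$, so AIAW gives indifference on the single class $\Sigma_{2}\cup\{S\}$, however $x,y$ sit in $\Sigma_{2}$.
\item Condition (3) then gives $xP$ on $\Sigma_{1}\succ\Sigma_{2}\cup\{S\}$.
\item IDWS splits the worst class, giving $xP$ on $\Sigma_{1}\succ\Sigma_{2}\succ\{S\}$. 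This ranking is $\succsim\cdot\succsim_{3}$ with $\succsim_{3}:\{S\}$, and AIAW gives $xI_{\succsim_{3}}y$.
\item Now suppose $yR_{\succsim}x$. If $xI_{\succsim}y$, condition (1) gives indifference on $\succsim\cdot\succsim_{3}$. If $yP_{\succsim}x$, condition (3) with the roles of $x$ and $y$ exchanged gives $yP$ on $\succsim\cdot\succsim_{3}$. Either way this contradicts $xP$ there.
\end{itemize}
This single argument, which is the paper's, covers both of your remaining cases (and your second easy case). It also excludes $yP_{\succsim}x$ and $xI_{\succsim}y$ at once, so you do not need to treat them separately.
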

\begin{proof}\label{prop:depend}
Consider a ranking $\succsim\in\domain$ with $\succsim:\Sigma_{1}\succ\Sigma_{2}$, and $x,y\in X$, such that $x\in\bigcup\Sigma_{1}$ and $y\notin\bigcup\Sigma_{1}$.
First, we want to prove that $x P_\succsim y$.
Notice that it must exist $S \notin \Sigma_1$ with $\{x,y\} \subseteq S$.
We distinguish two cases: $S \in \Sigma_2$ and $S \notin \Sigma_2$.

Suppose first that $S \in \Sigma_2$.
Consider the rankings $\succsim_1:\Sigma_{1}$ and $\succsim_2:\Sigma_{2}$.
By AIAW, we have $x P_{\succsim_1} y$ and $x I_{\succsim_2} y$. Then, by CCON and the fact that $\succsim=\succsim_1\cdot \succsim_2$, we have that $x P_{\succsim} y$.

Suppose now that $S \notin \Sigma_2$. Consider the ranking $\succsim'_2:\Sigma_{2}\cup \{S\}$.
By AIAW, we have  $x I_{\succsim'_2} y$.
Consider the ranking $\succsim':\Sigma_{1}\succ' \Sigma_{2}\cup \{S\}$. By CCON and the fact that $\succsim'=\succsim_1\cdot \succsim'_2$, we have that $x P_{\succsim'} y$.

Consider the ranking $\succsim'': \Sigma_{1}\succ'' \Sigma_{2}\succ'' \{S\}$. By IDWS on $\succ'$, we have that $x P_{\succsim''} y$. Finally, take the ranking $\succsim_3:\{S\}$.
By AIAW, $x I_{\succsim_3} y$. Then, by CCON on $\succsim''=\succsim\cdot \succsim_3$ and the fact proved earlier that  $x P_{\succsim''} y$, it must be $x P_{\succsim} y$ (otherwise, if $y R_{\succsim} x$ by CCON we would have $y R_{\succsim''} x$, a contradiction). This concludes the proof.
\end{proof}

\section{Main results}\label{sec:results}

In this section, we present our main results illustrating the role of consistency axioms in the analysis of SRSs. We start with an axiomatic characterization of S lex-cel.

\begin{theorem}
\label{theorem:SL}
Let $R$ be an SRS satisfying NT, CCON, and AIAW. Then, each of the following is logically equivalent: 
\begin{enumerate}
\item $R$ satisfies IAWS;
\item $R$ satisfies 
IDWS;
\item $R=R^{SL}$.
\end{enumerate}
\end{theorem}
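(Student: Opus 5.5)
The plan is to prove $(3)\Rightarrow(1)$, $(3)\Rightarrow(2)$, and then $(1)\Rightarrow(3)$ and $(2)\Rightarrow(3)$. The first two are direct checks on S lex-cel. Appending a worst class $\Gamma$ only adds a final coordinate to each $\dot\theta$. Decomposing $\Sigma_l$ replaces the last coordinate $\dot x_l$ by $(\dot x_{l,1},\dots,\dot x_{l,m})$. If $\dot\theta(x)>^L\dot\theta(y)$ is decided at some coordinate $k<l$, nothing changes. If it is decided at $l$, then $\dot x_l=1$ and $\dot y_l=0$, so $y$ appears in no $\Gamma_j$ while $x$ appears in some $\Gamma_j$, and the strict comparison survives. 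Hence $R^{SL}$ satisfies IAWS and IDWS.

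For the converse directions I first pin down $R$ on one-class rankings. By AIAW, for $\succsim:\Sigma$ the relation $R_\succsim$ is exactly $\dot\theta$-comparison: individuals present in $\bigcup\Sigma$ tie, absent ones tie among themselves, and present beats absent. For absent–absent pairs, AIAW only gives that both are beaten. To get $zI w$ for $z,w\notin\bigcup\Sigma$, I would use NT with a transposition $\sigma$ of $z,w$: then $\succsim^\sigma=\succsim$, so $zRw\iff wRz$, and completeness gives $zIw$. Now take a general $\succsim:\Sigma_1\succ\cdots\succ\Sigma_l$ and write it as the iterated concatenation $\succsim_1\cdot(\succsim_2\cdot(\cdots))$ with $\succsim_k:\Sigma_k$. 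Using CCON conditions (1)–(4) inductively, if $x,y$ tie in every class then $xIy$. If the first class where they differ is $k$, they tie in the earlier classes and $x$ is strict at $k$; to conclude $xPy$ I need the comparison on the tail $\Sigma_k\succ\cdots\succ\Sigma_l$ to be strict, since CCON only transfers ties/strict combinations of the allowed patterns. This is the crux: CCON alone cannot combine $xP y$ at class $k$ with $yPx$ at a later class. Here the independence axiom is used.

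\textbf{Main obstacle.} I must show that $xP_{\succsim}y$ whenever $\succsim=\Sigma_1\succ\cdots\succ\Sigma_l$ with $x\in\bigcup\Sigma_1$ and $y\notin\bigcup\Sigma_1$, whatever the lower classes are. Once this holds, prefixing the tied classes by CCON (II and IP cases) finishes the proof.

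Under IAWS, I would argue by induction on the number of classes below $\Sigma_1$. By AIAW, $xP_{\Sigma_1}y$. Apply IAWS to append $\Sigma_2$, giving $xP$ on $\Sigma_1\succ\Sigma_2$, then $\Sigma_3$, and so on. Disjointness of domains is automatic, so IAWS applies at each step and the strict preference survives to $\succsim$.

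Under IDWS, I would instead start from $\Sigma_1\succ(\Sigma_2\cup\cdots\cup\Sigma_l)$, where $xP y$ holds by the preceding proposition (CCON+AIAW+IDWS imply WUVIP). Then I decompose the worst class into $\Sigma_2\succ\cdots\succ\Sigma_l$ via IDWS and retain $xPy$.

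Finally, I would assemble the pieces. Let $k$ be the first index where $\dot x_k\neq\dot y_k$, say $\dot x_k=1$ and $\dot y_k=0$. Set $\succsim'=\Sigma_k\succ\cdots\succ\Sigma_l$, where $xP_{\succsim'}y$ holds by the claim. For $j<k$ we have $xI_{\Sigma_j}y$. Applying IP-CCON repeatedly to $\Sigma_j\cdot(\cdots)$ gives $xP_\succsim y$. Together with the all-ties case, this yields $R=R^{SL}$.
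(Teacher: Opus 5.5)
Your proposal is correct and follows essentially the same route as the paper: ties come from single-class rankings (AIAW, or NT when both are absent) combined by II-CCON. The strict case builds the tail $\Sigma_{\hat k}\succ\cdots\succ\Sigma_l$ either by AIAW plus repeated IAWS, or by WUVIP (from the preceding proposition) plus one IDWS decomposition, and then prefixes the tied classes via IP-CCON; the only difference is that in the IAWS case the paper prepends the tied classes by CCON before appending with IAWS. Your check of (3)$\Rightarrow$(2) is in fact more careful than the paper's, since you explicitly treat the case where the strict comparison is decided in the decomposed last class $\Sigma_l$.
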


\begin{proof}
Let $R$ be an SRS satisfying the stated axioms NT, CCON, and AIAW.

\noindent
\textit{\textbf{Proof of 1$\Rightarrow$3. }}
 Suppose that $R$ satisfies IAWS. Let $\succsim\in\domain$ and $x,y\in X$. We have to prove that $R=R^{SL}$. Therefore, we will prove that (i) if $\dot{\theta}_{\succsim}(x)=\dot{\theta}_{\succsim}(y)$, then $xI_{\succsim}y$; and (ii) if $\dot{\theta}_{\succsim}(x)>^{L}\dot{\theta}_{\succsim}(y)$, then $xP_{\succsim}y$. For each $k\in[l]$, we define the ranking $\succsim^{k}$ by $\succsim^{k}:\Sigma_{k}$.

Proof of (i): If $\dot{x}_{k}=\dot{y}_{k}=1$, then $xI_{\succsim^{k}}y$ by AIAW. If $\dot{x}_{k}=\dot{y}_{k}=0$, then $xI_{\succsim^{k}}y$ by NT. Therefore, we can conclude that $xI_{\succsim^{k}}y$  for all $k\in[l]$. By applying CCON repeatedly, we can infer that $xI_{\succsim}y$.

Proof of (ii): assume that $\dot{\theta}_{\succsim}(x)>^{L}\dot{\theta}_{\succsim}(y)$. By definition, there exists $\hat{k}\in[l]$, such that $x_{k}=y_{k}$ for all $k<\hat{k}$, and $\dot{x}_{\hat{k}}>\dot{y}_{\hat{k}}$. It follows that $\dot{x}_{\hat{k}}=1$ and $\dot{y}_{\hat{k}}=0$. Now, let
\begin{equation*}
\succsim_{1}:\Sigma_{1}\succ_{1}\cdots\succ_{1}\Sigma_{\hat{k}-1}\text{, and }
\succsim_{2}:\Sigma_{\hat{k}+1}\succ_{2}\cdots\succ_{2}\Sigma_{l}.
\end{equation*}
Then, we have $xI_{\succsim_{1}}y$ by case (i). By AIAW, we obtain $xP_{\succsim^{\hat{k}}}y$. Hence, CCON implies that $xP_{\succsim_{1}\cdot \succsim^{\hat{k}}}y$. By IAWS, we can say that $xP_{\succsim}y$, because $\succsim=(\succsim_{1}\cdot\succsim^{\hat{k}})\cdot\succsim_{2}$. 

\noindent
\textit{\textbf{Proof of 3$\Rightarrow$1.}}
Suppose that $R=R^{SL}$. Take $\succsim\in\domain$ with $\succsim:\Sigma_{1}\succ\cdots\succ\Sigma_{l}$ and $x,y\in X$ s.t. $x P^{SL}_{\succsim} y$. Take $\Gamma\subseteq\mathfrak{X}\setminus\mathcal{D}(\succsim)$ call $\succsim_{\Gamma}:\Gamma$ and define $\succsim'=\succsim \cdot\succsim_{\Gamma}$. By definition of $\dot{\theta}_{\succsim}(x)$ and $\dot{\theta}_{\succsim}(y)$, there exists $\hat{k}\in[l]$ such that $x_{k}=y_{k}$ for all $k<\hat{k}$, and $\dot{x}_{\hat{k}}>\dot{y}_{\hat{k}}$. By construction, we have that the first $\hat{k}$ components of $\dot{\theta}_{\succsim'}(x)$ and $\dot{\theta}_{\succsim'}(y)$ are equal to the components of $\dot{\theta}_{\succsim}(x)$ and $\dot{\theta}_{\succsim}(y)$, respectively. It follows that $\dot{\theta}_{\succsim'}(x)>^L\dot{\theta}_{\succsim'}(y)$ and axiom IAWS is satisfied.

\noindent
\textit{\textbf{Proof of 2$\Rightarrow$3}.}
Suppose that $R$ satisfies IDWS. By Proposition \ref{prop:depend} it also satisfies WUVIP. Let $\succsim\in\domain$ and $x,y\in X$. We will prove that (i) if $\dot{\theta}_{\succsim}(x)=\dot{\theta}_{\succsim}(y)$, then $xI_{\succsim}y$; (ii) if $\dot{\theta}_{\succsim}(x)>^{L}\dot{\theta}_{\succsim}(y)$, then $xP_{\succsim}y$. Case (i) can be shown as in the Proof of 1$\Rightarrow$3. So, we will prove (ii). By definition, there exists $\hat{k}\in[l]$ such that $x_{k}=y_{k}$ for all $k<\hat{k}$, and $\dot{x}_{\hat{k}}>\dot{y}_{\hat{k}}$. It follows that $\dot{x}_{\hat{k}}=1$ and $\dot{y}_{\hat{k}}=0$.
Let 
\begin{align*}
\succsim_{1}&:\Sigma_{1}\succ_{1}\cdots\succ_{1}\Sigma_{\hat{k}-1},\\
\succsim_{2}&:\Sigma_{\hat{k}}\succ_{2}\Sigma_{\hat{k}+1}\cup\cdots\cup\Sigma_{l},\\
\succsim_{3}&:\Sigma_{\hat{k}}\succ_{3}\Sigma_{\hat{k}+1}\succ_{3}\cdots\succ_{3}\Sigma_{l}.
\end{align*}

Then, we have $xI_{\succsim_{1}}y$ by case (i); $xP_{\succsim_{2}}y$ by WUVIP; $xP_{\succsim_{3}}y$ by IDWS; $xP_{\succsim}y$ by CCON as $\succsim=\succsim_1 \cdot \succsim_3$.

\noindent
\textit{\textbf{Proof of 3$\Rightarrow $2. }}
Suppose that $R=R^{SL}$. Take $\succsim\in\domain$ and $x,y\in X$ s.t. $x P^{SL}_{\succsim} y$. Take $\succsim''\in\domain$ obtained by $\succsim$ decomposing the last equivalence class; then, as before,the first $\hat{k}$ components of $\dot{\theta}_{\succsim''}(x)$ and $\dot{\theta}_{\succsim''}(y)$ are equal to the components of $\dot{\theta}_{\succsim}(x)$ and $\dot{\theta}_{\succsim}(y)$, respectively. It follows that $\dot{\theta}_{\succsim''}(x)>^L\dot{\theta}_{\succsim''}(y)$ and axiom IDWS is satisfied.
\end{proof}
\begin{remark}
   It is trivial to prove that $R^{SL}$ also satisfies WCA.
\end{remark}
We now introduce an axiomatic characterization of Plurality.

\begin{theorem}
\label{theorem:plurality}
An SRS $R:\domain\rightarrow\codomain$ satisfies NT, WCA, WUVIP, TCON, and TO if and only if $R=R^{P}$.
\end{theorem}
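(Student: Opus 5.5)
The necessity direction is a routine check. Plurality depends only on $\Sigma_{1}$, which gives TO. It is label-free, which gives NT. It depends only on counts of top coalitions containing $x$ or $y$, which an $\{x,y\}$-invariant relabelling of coalitions preserves, so WCA holds. For WUVIP, $x_{1}\geq 1>0=y_{1}$. TCON holds because the top class of $\succsim_{1}\topsum\succsim_{2}$ is $\Sigma_{1}\cup\Gamma_{1}$ with the two parts disjoint, so $x_{1}$ is additive over the two summands. This was already noted in Proposition \ref{proposition:relationship of consistencies}.

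For sufficiency, take $R$ satisfying NT, WCA, WUVIP, TCON and TO, fix $\succsim:\Sigma_{1}\succ\cdots\succ\Sigma_{l}$, and fix $x,y$.

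\emph{Step 1: reduce to the top class.} By TO, $R_{\succsim}=R_{\succsim^{1}}$ with $\succsim^{1}:\Sigma_{1}$, so only a single-class ranking needs to be handled. Partition $\Sigma_{1}$ into $A$ (coalitions containing $x$ but not $y$), $B$ (containing $y$ but not $x$) and $C$ (the rest), so that $x_{1}-y_{1}=|A|-|B|$. The key observation is that the top-aligned sum of disjoint single-class rankings is the single-class ranking on the union. Hence TCON can be applied repeatedly to glue together elementary single-class rankings whose union is $\Sigma_{1}$.

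\emph{Step 2: elementary pieces.}
\begin{itemize}
\item For $S\in C$, let $\sigma$ be the transposition of $x$ and $y$. Then $\sigma(S)=S$, so $\succsim^{\sigma}=\succsim$ for $\succsim:\{S\}$. NT then gives $xR y\iff yR x$, and completeness yields $xI y$.
\item For $S\in A$, consider $\{S\}\succ\{T\}$ for any other $T$. WUVIP gives $xP y$ there, and TO transfers this to $\{S\}$ alone. Symmetrically, $yPx$ on $\{T\}$ for $T\in B$.
\item For a pair $S\in A$, $T\in B$, consider $\{S,T\}$. Applying $\sigma$ gives the ranking $\{\sigma(S),\sigma(T)\}$, where $\sigma(S)$ contains $y$ but not $x$ and $\sigma(T)$ contains $x$ but not $y$. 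Take a permutation $\pi$ of $\mathfrak{X}$ exchanging $\sigma(S)\leftrightarrow T$ and $\sigma(T)\leftrightarrow S$ (identity if they coincide) and fixing everything else. Since it only swaps coalitions with identical $\{x,y\}$-membership, $\pi$ is $\{x,y\}$-invariant. WCA and NT together show that $R$ restricted to $\{x,y\}$ on $\{S,T\}$ equals its own transpose, hence $xIy$.
\end{itemize}

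\emph{Step 3: assembly.} Match $\min(|A|,|B|)$ elements of $A$ with elements of $B$ into pairs. Then glue, via TCON, all pairs and all $C$-singletons (each giving $I$), using rule (1). If $|A|>|B|$, add the leftover $A$-singletons (each giving $P$) using rules (2) and (4); symmetrically if $|B|>|A|$. The result is $xIy$ when $x_{1}=y_{1}$ and strict preference in the direction of the larger count otherwise. This is exactly $R^{P}$. The empty edge cases, such as $A\cup B\cup C$ having one element, are handled by the single-piece statements directly.

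The main obstacle is the WCA step for an $A$–$B$ pair. One has to check that the coalition permutation can be chosen to be a genuine bijection on $\mathfrak{X}$ that is $\{x,y\}$-invariant: $x$-invariance is only required in the forward direction, but a swap of same-membership coalitions is fine. One also has to handle coincidences such as $\sigma(S)=T$. A secondary point is ensuring the pieces glued by TCON have pairwise disjoint domains, which holds since they partition $\Sigma_{1}$.
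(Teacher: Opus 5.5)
Your proof is correct. It uses the same ingredients as the paper: TO to reduce to the single class $\Sigma_1$, NT and WCA for balanced configurations, WUVIP plus TO for coalitions containing $x$ but not $y$, and TCON to glue. The difference is that you decompose much more finely.

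The paper never breaks $\Sigma_1$ into atoms. In the tie case $x_1=y_1$ it applies NT and WCA once to the whole ranking $\Sigma_1$. This implicitly uses an $\{x,y\}$-type-preserving bijection from $\Sigma_1$ onto $\sigma(\Sigma_1)$, which exists because $|A|=|B|$. So no consistency axiom is needed there. In the strict case it splits off a single block $\Gamma\subseteq A$ with $|\Gamma|=x_1-y_1$. It gets $xPy$ on $\Gamma$ from WUVIP applied to $\Gamma\succ\{\{x,y\}\}$ followed by TO, and $xIy$ on the balanced remainder $\Sigma_1\setminus\Gamma$ from the tie case. It then invokes TCON exactly once, via rule (3).

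Your version buys a completely explicit WCA step. The pair step uses two swaps of same-type coalitions, and the $C$-singletons need only NT, whereas the paper leaves its permutation construction to the reader. The price is that you rely on TCON repeatedly, through rules (1), (2) and (4), and even the tie case depends on TCON. In the paper the tie case is independent of TCON, and the strict case needs a single application.
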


\begin{proof}
The "if" part is straightforward. We will prove the "only if" part. Let $R$ be an SRS satsifying the axioms. Let $\succsim\in\domain$ with $\succsim:\Sigma_{1}\succ\cdots\succ\Sigma_{l}$ and $x,y\in X$. We will prove that (i) if $x_{1}=y_{1}$, then $xI_{\succsim}y$; and (ii) if $x_{1}>y_{1}$, then $xP_{\succsim}y$. 

Proof of (i). Assume that $x_{1}=y_{1}$. Let $\succsim':\Sigma_{1}$. Then, we have $xI_{\succsim}y$ by NT and WCA. By TO, we have that $R_{\succsim}=R_{\succsim'}$. Hence, we can conclude that $xI_{\succsim'}y$.

Proof of (ii): assume that $x_{1}>y_{1}$. Let $\Gamma\subseteq \Sigma_{1}[x]\setminus\Sigma_{1}[y]$ with $\lvert \Gamma\rvert=x_{1}-y_{1}$. As $x_{1}>y_{1}$, $\Gamma$ is nonempty. Now, let
\begin{equation*}
\succsim_{1}:\Gamma,\quad  \succsim_{2}:\Gamma\succ_{2}\{\{x,y\}\},\quad  \succsim_{3}:\Sigma_{1}\setminus\Gamma,\quad  \succsim_{4}:\Sigma_{1}.
\end{equation*}
Then, we have $R_{\succsim_{1}}=R_{\succsim_{2}}$ by TO, and $xP_{\succsim_{2}}y$ by WUVIP. Hence, we have that $xP_{\succsim_{1}}y$. Furthermore, we have $xI_{\succsim_{3}}y$ by case (i). As $\succsim_{4}=\succsim_{1}\topsum\succsim_{3}$, TCON implies that $xP_{\succsim_{4}}y$. By TO, we obtain that $xP_{\succsim}y$.
\end{proof}

We conclude this section with an axiomatic characterization of lex-cel.
\begin{theorem}
\label{theorem:L2}
An SRS $R:\domain\rightarrow\codomain$ satisfies NT, WCA, WUVIP, CCON, TCON, and IAWS if, and only if, $R=R^{L}$.
\end{theorem}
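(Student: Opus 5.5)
The ``if'' direction is a direct verification, so I will concentrate on the ``only if'' part. I would still check each axiom for $R^{L}$ along the way, in particular CCON and TCON.

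\emph{CCON.} Under $\succsim_{1}\cdot\succsim_{2}$ the vector $\theta_{\succsim}(x)$ is the concatenation $(\theta_{\succsim_{1}}(x),\theta_{\succsim_{2}}(x))$. Lexicographic comparison of concatenations yields (1)--(4).

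\emph{TCON.} Under $\succsim_{1}\topsum\succsim_{2}$ the vector $\theta_{\succsim}(x)$ is the componentwise sum of the two vectors, the shorter one padded with zeros. Lexicographic order is compatible with this addition, which again yields (1)--(4).

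\emph{IAWS, NT, WCA.} Appending a new worst class only adds a trailing coordinate. NT and WCA hold because $\theta$ depends only on the number of coalitions containing $x$ in each class.

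\emph{WUVIP.} If $x\in\bigcup\Sigma_{1}$ and $y\notin\bigcup\Sigma_{1}$, then $x_{1}\geq1>0=y_{1}$.

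For the converse, let $R$ satisfy the six axioms, fix $\succsim:\Sigma_{1}\succ\cdots\succ\Sigma_{l}$ and $x,y$. I want to show two things: $\theta_{\succsim}(x)=\theta_{\succsim}(y)$ implies $xI_{\succsim}y$, and $\theta_{\succsim}(x)>^{L}\theta_{\succsim}(y)$ implies $xP_{\succsim}y$. As in the proof of Theorem~\ref{theorem:SL}, write $\succsim^{k}:\Sigma_{k}$ and decompose $\succsim=\succsim^{1}\cdot\succsim^{2}\cdots\succsim^{l}$. By CCON everything then reduces to single-class rankings, plus IAWS for discarding the classes after the decisive index $\hat{k}$.

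Concretely, for (ii) let $\hat{k}$ be the first index with $x_{\hat{k}}>y_{\hat{k}}$. Once the single-class lemma below is available, it gives $xI_{\succsim^{k}}y$ for $k<\hat{k}$ and $xP_{\succsim^{\hat{k}}}y$. CCON then gives $xP$ on $\succsim^{1}\cdot\ldots\cdot\succsim^{\hat{k}}$, and IAWS, applied with the remaining classes merged into one worst set and appended, extends this strictness to $\succsim$. This last step needs some care. IAWS appends a single worst class $\Gamma$, whereas $\succsim$ may have several further classes. I would therefore appeal to CCON instead: $\succsim=(\succsim^{1}\cdots\succsim^{\hat{k}})\cdot\succsim_{2}$. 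If $xI_{\succsim_{2}}y$ or $xP_{\succsim_{2}}y$, condition (3) or (4) finishes. If $yP_{\succsim_{2}}x$, I would use IAWS on the appended merged class $\Sigma_{\hat{k}+1}\cup\cdots\cup\Sigma_{l}$, combined with CCON, to rule out reversal. I expect this to reduce to the strictness on $\succsim^{1}\cdots\succsim^{\hat{k}}$ followed by IAWS applied class by class, one appended class at a time. This mirrors the 1$\Rightarrow$3 argument of Theorem~\ref{theorem:SL}.

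The core, and the main obstacle, is the \textbf{single-class lemma}: for $\succsim':\Sigma$ with one class, $xR_{\succsim'}y\iff|\Sigma[x]|\geq|\Sigma[y]|$. I would argue as in the proof of Theorem~\ref{theorem:plurality}, with TCON taking the place of the one-class consistency and TO no longer available.

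\emph{Equal case.} If $|\Sigma[x]|=|\Sigma[y]|$, choose an $\{x,y\}$-invariant permutation $\pi$ of $\mathfrak{X}$ that maps the profile to its image under the transposition $\sigma$ swapping $x$ and $y$. This is possible because the counts of coalitions containing only $x$, only $y$, or both can be matched: pair $\Sigma[x]\setminus\Sigma[y]$ with $\Sigma[y]\setminus\Sigma[x]$, which have equal size. NT then forces $xR_{\succsim'}y\iff yR_{\succsim'}x$, hence $xI_{\succsim'}y$.

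\emph{Strict case.} If $|\Sigma[x]|>|\Sigma[y]|$, pick $\Gamma\subseteq\Sigma[x]\setminus\Sigma[y]$ of size $|\Sigma[x]|-|\Sigma[y]|$; then $\Sigma\setminus\Gamma$ is balanced, so $xI$ holds on $\Sigma\setminus\Gamma$. Showing $xP_{\Gamma}y$ is where TO was used before. Instead I would first show $xP$ on $\Gamma\succ\{T\}$ with $T\ni x,y$ (or $T=\{y\}$ when that is not already a member) via WUVIP. I would then remove the bottom class $\{T\}$. If $T$ contains both $x$ and $y$, the ranking $\{T\}$ alone gives $xI$ by the equal case. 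Writing $\Gamma\succ\{T\}$ as the concatenation $\Gamma\cdot\{T\}$, CCON forbids $yR_{\Gamma}x$: otherwise conditions (1)/(2), or the reversed (3)/(4), would give $yR$ on the concatenation. Hence $xP_{\Gamma}y$.

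Finally, TCON applied to $\Gamma\topsum(\Sigma\setminus\Gamma)=\Sigma$ yields $xP_{\Sigma}y$. Disjointness of the domains and WCA are used to ensure that such $T$ and $\Gamma$ exist within $\mathfrak{X}$, using $|X|\geq3$ if needed.
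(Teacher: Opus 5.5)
Your proposal is correct and follows the paper's route. You get $xI$ on each balanced class via NT and WCA, glued together by CCON. For the decisive class you use the same excess set $\Gamma$, WUVIP on $\Gamma\succ\{\{x,y\}\}$, CCON to strip the indifferent bottom class $\{\{x,y\}\}$, TCON to recombine $\Gamma$ with the balanced remainder $\Sigma_{\hat{k}}\setminus\Gamma$, CCON to prepend $\Sigma_{1},\ldots,\Sigma_{\hat{k}-1}$, and IAWS for the tail.

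For the tail, drop the CCON case split and the merged-class idea: appending $\Sigma_{\hat{k}+1}\cup\cdots\cup\Sigma_{l}$ as a single class gives a ranking different from $\succsim$, and splitting it back would need IDWS, which lex-cel fails. Instead apply IAWS once per class $\Sigma_{\hat{k}+1},\ldots,\Sigma_{l}$, which is also what the paper's single appeal to IAWS tacitly does.
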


\begin{proof}
Let $\succsim\in\domain$ with $\succsim:\Sigma_{1}\succ\cdots\succ\Sigma_{l}$ and $x,y\in X$. We are supposed to prove that (i) if $\theta_{\succsim}(x)=\theta_{\succsim}(y)$, then $xI_{\succsim}y$; and (ii) if $\theta_{\succsim}(x)>^{L}\theta_{\succsim}(y)$, then $xP_{\succsim}y$. 

Proof of (i): Define the ranking $\succsim^{k}$ by $\succsim^{k}:\Sigma_{k}$. We have $xI_{\succsim_{k}}y$ by WCA and NT. Therefore, CCON implies that $xI_{\succsim}y$. 

Proof of (ii): Assume that $\theta_{\succsim}(x)>^{L}\theta_{\succsim}(y)$. Let $\Gamma\subseteq \Sigma_{k}[x]\setminus\Sigma_{k}[y]$ with $\lvert\Gamma\rvert=x_{k}-y_{k}$. Let 
\begin{equation*}
\begin{aligned}
&\succsim_{1}:\Sigma_{1}\succ_{1}\cdots\succ_{1}\Sigma_{\hat{k}-1},
&\succsim_{2}:\Gamma,\quad\quad
&\succsim_{3}:\{\{x,y\}\},\\  
&\succsim_{4}:\Gamma\succ_{4}\{\{x,y\}\}(=\succsim_{2}\topsum\succsim_{3}),
&\succsim_{5}:\Sigma_{\hat{k}}\setminus\Gamma,
&\succsim_{6}:\Sigma_{\hat{k}}(=\succsim_{2}\topsum\succsim_{5}),\\
&\succsim_{7}:\Sigma_{1}\succ\cdots\succ\Sigma_{\hat{k}}. & &
\end{aligned}
\end{equation*}
Then, we have that $xI_{\succsim_{1}}y$ and $xI_{\succsim_{3}}y$ by (i); $R_{\succsim_{2}}\mid_{\{x,y\}}=R_{\succsim_{4}}\mid_{\{x,y\}}$ by CCON ($\because$ $\succsim_{4}=\succsim_{2}\cdot\succsim_{3}$ and $xI_{\succsim_{3}}y$); $xP_{\succsim_{4}}y$ by WUVIP. Hence, we obtain that $xP_{\succsim_{2}}y$. Furthermore, we also have that $xI_{\succsim_{5}}y$ by (i); $xP_{\succsim_{6}}y$ by TCON; $xP_{\succsim_{7}}y$ by IAWS.
\end{proof}

\subsection{Independence of the axioms}

In this section, we study the logical independence of the axioms used to axiomatically charcaterize SRSs in Section \ref{sec:results}. We first introduce some notions and new definitions of SRSs that will be considered in the following propositions to prove axioms independence. In other words, when studying a set of axioms investigated in Section \ref{sec:results}, each SRS satisfies a combination of all axioms but one. For our purpose, the value of SRSs introduced in this section lies solely in proving the logical independence of axioms studied in Section \ref{sec:results}; therefore, we will not waste words on their interpretation, which in some cases is completely meaningless.

For $\theta_{\succsim}(x)=(x_{1},\cdots,x_{l})$ and $\theta_{\succsim}(y)=(y_{1},\cdots,y_{l})$, let us write $\theta_{\succsim}(x)\geq^{DL}\theta_{\succsim}(y)$ if $\theta_{\succsim}(x)=\theta_{\succsim}(y)$ or there exists $\hat{k}\in[l]$ such that $x_{k}=y_{k}$ for all $k>\hat{k}$, and $x_{\hat{k}}<y_{\hat{k}}$. 
We now define other SRSs. 

\begin{definition}
\label{definition:variants of lex-cel and S lex-cel}
Some variants of (S) lex-cels are introduced as follows. For any $\succsim\in\domain$ and $x,y\in X$, 
\begin{itemize}
\item \textit{dual S lex-cel} $R^{DSL}$: $xR^{DSL}_{\succsim}y\iff \dot{\theta}_{\succsim}(x)\geq^{DL}\dot{\theta}_{\succsim}(y)$;
\item \textit{inverse dual S lex-cel }$R^{ISDL}$: $xR^{ISDL}_{\succsim}y\iff yR^{DSL}_{\succsim}x$.
\item \textit{Split lex-cel $R^{splitL}$} as follows. For any $\succsim\in\domain$ and $x\in X$, let $\theta^{split}_{\succsim}(x):=(x^{split}_{1},\cdots,x^{split}_{l})$, where $x^{split}_{k}:=\sum_{S\in\Sigma_{k}[x]}\frac{1}{\lvert S\rvert}$. Then, $xR^{split}_{\succsim}y\iff \theta^{split}_{\succsim}(x)\geq^{L}\theta^{split}_{\succsim}(y)$.
\item \textit{Lex-cel with tie-breaking by $\vartriangleright$}, denoted as $R^{L,\vartriangleright}$, is as follows: 
\[xR^{L,\vartriangleright}_{\succsim}y\iff (xP^{L}_{\succsim}y\text{ or }(xI^{L}_{\succsim}y\text{ and }x\vartriangleright y)).\]
\end{itemize}
\end{definition}
We now introduce further variants of lex-cel.
\begin{definition}
\label{definition:further variants of lex-cel}

 We introduce further variants of S lex-cel.
 For any $\succsim\in\domain$ with $\succsim:\Sigma_{1}\succ\cdots\succ\Sigma_{l}$ and $x,y\in X$, 
\begin{itemize}    
\item \textit{S lex-cel with a particular attention to the non-existence}, $R^{SLNE}$, is defined as follows: (a) if $\dot{\theta}_{\succsim}(x)=\dot{\theta}_{\succsim}(y)$ and there exists $k\in[l]$ such that $\dot{x}_{k}=\dot{y}_{k}=0$, then $R^{SLNE}_{\succsim}\mid_{\{x,y\}}=$ $\vartriangleright\mid_{\{x,y\}}$; and (b) otherwise, we have $R^{SLNE}_{\succsim}\mid_{\{x,y\}}=R^{SL}_{\succsim}\mid_{\{x,y\}}$.
\item \textit{S lex-cel with a particular attention to the non-existence in higher classes}, $R^{SLNEH}$, is defined as follows: (a) if there exists $\hat{k}\in[l]$ such that $\dot{x}_{k}=\dot{y}_{k}=1$ for all $k<\hat{k}$ and $\dot{x}_{\hat{k}}=\dot{y}_{\hat{k}}=0$, then $R^{SLNEH}_{\succsim}\mid_{\{x,y\}}=$ $\vartriangleright\mid_{\{x,y\}}$; and (b) otherwise, then $R^{SLNEH}_{\succsim}\mid_{\{x,y\}}=R_{\succsim}^{SL}\mid_{\{x,y\}}$.
\item \textit{S lex-cel with a particular precedence on unanimity}, $R^{SLUN}$, is defined as follows: 
$xR^{SLUN}_{\succsim}y\iff \tilde{\theta}_{\succsim}(x)\geq^{L} \tilde{\theta}_{\succsim}(y),$
where for any $z\in X$, we denote $\tilde{\theta}_{\succsim}(z)=(\tilde{z}_{1},\cdots,\tilde{z}_{l})$, where $$\tilde{z}_{k}=\begin{cases}
2 &\text{if }z_{k}=\lvert\Sigma_{k}\rvert,\\
1 &\text{if }1\leq z_{k}<\lvert\Sigma_{k}\rvert,\\
0&\text{if }z_{k}=0.
\end{cases}$$
\end{itemize}
\end{definition}

We further define \textit{sum rule} $R^{SUM}$, \textit{sign sum rule} (\textit{S sum}) $R^{SSUM}$, \textit{ sum rule  with tie-breaking by lex-cel} $R^{SUM,L}$, and \textit{S sum rule  with tie-breaking by S lex-cel} $R^{SSUM,SL}$, as follows.
\begin{definition}
\label{definition:sum rules}
 For any $\succsim\in\domain$ with $\succsim:\Sigma_{1}\succ\cdots\Sigma_{l}$ and $x,y\in X$, 
\begin{align*}
xR^{SUM}_{\succsim}y&\iff \sum_{k\in[l]}x_{k}\geq \sum_{k\in[l]}y_{k};\\
xR^{SSUM}_{\succsim}y&\iff \sum_{k\in[l]}\dot{x}_{k}\geq \sum_{k\in[l]}\dot{y}_{k};\\
 xR^{SUM,L}_{\succsim}y&\iff (xP^{SUM}_{\succsim}y\text{ or }(xI^{SUM}_{\succsim}y\text{ and }xR^{L}_{\succsim}y));\\
xR^{SSUM,SL}_{\succsim}y&\iff (xP^{SSUM}_{\succsim}y\text{ or }(xI^{SSUM}_{\succsim}y\text{ and }xR^{SL}_{\succsim}y)).
\end{align*}
\end{definition}


Next, some variants of plurality are introduced. 
\begin{definition}
\label{definition:variants of plurality}
For any $\succsim\in\domain$ and $x,y\in X$, 
\begin{itemize}
\item \textit{Split plurality $R^{splitP}$} as follows. 

Let $\theta^{split}_{\succsim}(x):=(x^{split}_{1},\cdots,$ $x^{split}_{l})$, where 

$x^{split}_{k}:=\sum_{S\in\Sigma_{k}[x]}\frac{1}{\lvert S\rvert}$, and then

$xR^{split}_{\succsim}y\iff x_1^{split}\geq y_1^{split}$.
\item \textit{Plurality with tie-breaking by $\vartriangleright$}, denoted as $R^{P,\vartriangleright}$, as follows: 
$xR^{P,\vartriangleright}_{\succsim}y\iff xP^{P}_{\succsim}y\text{ or }(xI^{P}_{\succsim}y\text{ and }x\vartriangleright y)$.
\end{itemize}
\end{definition}

Finally, we define the following trivial SRS.

\begin{definition}
\label{definition:constant rule}
\textit{Constant-$X$} $R^{constX}$ is an SRS such that for any $\succsim\in\domain$, we have $R^{constX}=X\times X$.
\end{definition}

We are now ready to claim the logical independence of the axioms systems studied in Section \ref{sec:results} (for space reasons, a formal proof of each proposition is  provided in the Appendix).

\begin{proposition}\label{prop:indipSL1}
The axioms NT, CCON, AIAW, and IAWS in Theorem \ref{theorem:SL}.(1) are logically independent. In fact, \begin{itemize}
\item Inverse dual S lex-cel $R^{IDSL}$ satisfies the four axioms except IAWS; 
\item Lex-cel $R^{L}$ satisfies the four axioms except AIAW; 
\item S plurality $R^{SP}$ satisfies the four axioms except CCON; 
\item  S lex-cel with a particular attention to the non-existence in higher class $R^{SLNEH}$ satisfies the four axioms except NT.
\end{itemize}
\end{proposition}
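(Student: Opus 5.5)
The plan is to take the four SRSs in turn and, for each one, check the three axioms it is claimed to satisfy and give an explicit counterexample to the fourth. Throughout I fix distinct $x,y\in X$ and use the vectors $\dot{\theta}_{\succsim}$ and $\theta_{\succsim}$. For concatenation, the only fact I need is that $\theta_{\succsim_1\cdot\succsim_2}(x)$ is the juxtaposition of $\theta_{\succsim_1}(x)$ and $\theta_{\succsim_2}(x)$, and likewise for $\dot{\theta}$.

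\emph{Inverse dual S lex-cel.} Under $R^{IDSL}$, $x$ is strictly better than $y$ exactly when, at the \emph{last} index where $\dot{\theta}(x)$ and $\dot{\theta}(y)$ differ, $x$ has a $1$ and $y$ has a $0$.
\begin{itemize}
\item NT holds because the rule depends only on the vectors of the individuals.
\item CCON follows from the juxtaposition fact. If the two individuals are indifferent on $\succsim_2$, their $\succsim_2$-blocks coincide, so the comparison is decided inside the $\succsim_1$-block. Otherwise it is decided inside the $\succsim_2$-block. Checking (1)--(4) is then immediate.
\item AIAW holds. With a single class, individuals in $\bigcup\mathcal{D}(\succsim)$ have vector $(1)$ and all others have $(0)$.
\item IAWS fails. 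Take $\succsim:\{\{x\}\}$, so that $xP y$. Append the worst class $\Gamma=\{\{y\}\}$. Now $y$ has a $1$ in the last class and $x$ does not, so $yPx$.
\end{itemize}

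\emph{Lex-cel.} NT holds by definition. CCON holds by the same juxtaposition argument, now for lexicographic order read from the front. IAWS holds because appending a class leaves the first strict difference untouched. AIAW fails on $\succsim:\{\{x\},\{x,y\}\}$: here $\theta(x)=(2)$ and $\theta(y)=(1)$, so $xP^{L}y$, whereas AIAW demands $xIy$.

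\emph{S plurality.} NT holds. AIAW holds because the top class is the only class. IAWS holds because the top class is unchanged by adding a worst class. CCON fails on $\succsim_1:\{\{x,y\}\}$ and $\succsim_2:\{\{x\}\}$. We have $xI y$ on $\succsim_1$ and $xPy$ on $\succsim_2$, but on $\succsim_1\cdot\succsim_2$ only the top class $\{\{x,y\}\}$ counts, giving $xIy$. This violates (2).

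\emph{$R^{SLNEH}$.} First I rewrite the definition in a usable form. Let $\hat{k}$ be the first index with $(\dot{x}_k,\dot{y}_k)\neq(1,1)$.
\begin{itemize}
\item If no such index exists, then $xIy$.
\item If $\dot{x}_{\hat{k}}=\dot{y}_{\hat{k}}=0$, the pair is ordered by $\vartriangleright$.
\item Otherwise the individual with the $1$ at $\hat{k}$ is strictly preferred.
\end{itemize}
In particular, $xIy$ holds if and only if all entries are $(1,1)$. The four checks are then as follows.
\begin{itemize}
\item NT fails. Take $z\notin\{x,y\}$ and $\succsim:\{\{z\}\}$. The pair $x,y$ is ordered by $\vartriangleright$, and the transposition of $x$ and $y$ leaves $\succsim$ unchanged while reversing the outcome.
\item AIAW holds. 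Members of the union have $(1)$ and non-members have $(0)$.
\item IAWS holds. If $xPy$, then $\hat{k}\le l$ exists, and appending a class changes neither $\hat{k}$ nor the entries at $\hat{k}$.
\item CCON holds, via the same first-index description. If $\succsim_1$ has some entry other than $(1,1)$, then $R_{\succsim_1\cdot\succsim_2}$ agrees with $R_{\succsim_1}$ on $\{x,y\}$. Otherwise $xI_{\succsim_1}y$, and the result agrees with $R_{\succsim_2}$. Conditions (1)--(4) follow at once.
\end{itemize}

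The main obstacle is this last CCON verification for $R^{SLNEH}$, and in particular (2) and (3). Both rely on the observation that indifference under $R^{SLNEH}$ occurs only in the all-$(1,1)$ case, so that the $\vartriangleright$ tie-break always yields a strict relation. The remaining checks are routine.
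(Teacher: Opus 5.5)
Your proposal is correct and follows essentially the same route as the paper: the same four witnessing rules with the same counterexamples, and for the CCON check of $R^{SLNEH}$ the same key fact, namely that $xI_{\succsim}y$ holds exactly when $\dot{\theta}_{\succsim}(x)$ and $\dot{\theta}_{\succsim}(y)$ both equal the all-ones vector. Your ``first non-$(1,1)$ index'' reformulation is a tidier packaging of the paper's case split on steps (a) and (b). It shows that $R_{\succsim_1\cdot\succsim_2}\mid_{\{x,y\}}$ coincides with $R_{\succsim_1}\mid_{\{x,y\}}$ or with $R_{\succsim_2}\mid_{\{x,y\}}$. Your S plurality counterexample with $\succsim_2:\{\{x\}\}$ is also correctly labelled, whereas the paper writes $\{\{y\}\}$ but then claims $xP^{SP}y$.
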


\begin{proposition}\label{prop:indipSL2}
The axioms in NT, CCON, AIAW, and IDWS in Theorem \ref{theorem:SL}.(2) are logically independent. In fact, 
\begin{itemize}
\item S sum rule with
tie-breaking by S lex-cel $R^{SSUM,SL}$ satisfies the four axioms except IDWS;
\item S lex-cel with a particular precedence on unanimity $R^{SLUN}$ satisfies the four axioms except AIAW;
\item S plurality $R^{SP}$ satisfies the four axioms except CCON;
\item S lex-cel with a particular attention to the non-existence $R^{SLNE}$ satisfies the four axioms except NT.
\end{itemize}
\end{proposition}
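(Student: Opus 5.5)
The plan is to check, for each of the four SRSs listed in the statement, that it satisfies the three axioms it is supposed to satisfy and violates the remaining one. Throughout I use one observation. The concatenation $\succsim_{1}\cdot\succsim_{2}$ has exactly the equivalence classes of $\succsim_{1}$ followed by those of $\succsim_{2}$. Hence $\dot{\theta}_{\succsim_{1}\cdot\succsim_{2}}(x)$ is the juxtaposition of $\dot{\theta}_{\succsim_{1}}(x)$ and $\dot{\theta}_{\succsim_{2}}(x)$, and the same holds for $\theta$ and for $\tilde{\theta}$. So any rule that compares such vectors lexicographically from the top inherits CCON directly from the lexicographic order. The same juxtaposition fact gives IDWS for rules of this kind, by the argument in the proof of 3$\Rightarrow$2 of Theorem \ref{theorem:SL}.

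For $R^{SSUM,SL}$, NT is immediate. For AIAW, the S sum equals $1$ on $\bigcup\mathcal{D}(\succsim)$ and $0$ outside it. For CCON, the S sums add under concatenation and the S lex-cel vectors juxtapose. I will run through the four cases (1)--(4) of Definition \ref{definition:consistencies}, splitting each strict comparison according to whether it comes from a strict S sum or from an equal sum broken by $R^{SL}$. In every mixed case, either the total sum becomes strictly larger or the sums are equal and the juxtaposed S lex-cel vector is strictly larger. To show that IDWS fails, I take $\Sigma_{1}=\{\{x\}\}$ and $\Sigma_{2}=\{\{y\},\{y,z\}\}$; here $xP y$ by the S lex-cel tie-break. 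I then split $\Sigma_{2}$ into $\{\{y\}\}\succ\{\{y,z\}\}$, which raises the S sum of $y$ to $2$, so that $yPx$.

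For $R^{SLUN}$, NT and CCON follow from the observation above. To show that AIAW fails, I take a single class in which $x$ lies in every coalition and $y$ lies in only some of them; then $\tilde{x}_{1}=2>1=\tilde{y}_{1}$. For IDWS, only the case $\hat{k}=l$ needs work. If $\tilde{y}_{l}=0$, then $y$ is absent from every subclass, while $x$ appears in some subclass, and that subclass decides the comparison. If $\tilde{x}_{l}=2$ and $\tilde{y}_{l}=1$, then $x$ scores $2$ in every subclass, whereas $y$ is missing from some coalition of $\Sigma_{l}$ and therefore scores less than $2$ in some subclass. The first such subclass decides the comparison in favour of $x$. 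For $R^{SP}$, the failure of CCON comes from an example as in Proposition \ref{proposition:relationship of consistencies}. NT is clear, AIAW is clear, and IDWS holds because $R^{SP}$ depends only on the top class, which is unchanged when $l\ge 2$.

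The main obstacle is $R^{SLNE}$, and I expect IDWS to genuinely fail for it. Let $x\vartriangleright y$, $\Sigma_{1}=\{\{z\}\}$ and $\Sigma_{2}=\{\{x\},\{y\}\}$. Then $\dot{\theta}(x)=\dot{\theta}(y)=(0,1)$, so the tie-break gives $xPy$. Decomposing the worst class into $\{\{y\}\}\succ\{\{x\}\}$ gives $\dot{\theta}(y)=(0,1,0)>^{L}(0,0,1)=\dot{\theta}(x)$, so $yPx$. The same example defeats $R^{SLNEH}$, since here $\hat{k}=1$ with an empty prefix. I would therefore first re-examine these definitions. If this counterexample stands, I would substitute the following rule: use $R^{SL}$, except that when $\dot{\theta}_{\succsim}(x)=\dot{\theta}_{\succsim}(y)$ is the zero vector, set $R_{\succsim}\mid_{\{x,y\}}=\vartriangleright\mid_{\{x,y\}}$.

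This substitute violates NT because of the tie-break. It satisfies AIAW, because two members of $\bigcup\mathcal{D}(\succsim)$ both have vector $(1)$. It satisfies IDWS, because an all-zero pair stays all-zero after decomposition, and strict S lex-cel comparisons are preserved as before. For CCON, I will check the four cases using two facts. First, an indifference in either summand forces equal nonzero vectors. Second, a tie-broken strict preference in both summands yields an all-zero pair in the concatenation, with the same orientation $x\vartriangleright y$.
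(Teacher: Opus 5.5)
Your checks of $R^{SSUM,SL}$, $R^{SLUN}$ and $R^{SP}$ are fine and follow the paper's own arguments.

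Your counterexample for $R^{SLNE}$ is also correct. With $x\vartriangleright y$, the ranking $\{\{z\}\}\succ\{\{x\},\{y\}\}$ falls under clause (a) and gives $xPy$. Its decomposition $\{\{z\}\}\succ\{\{y\}\}\succ\{\{x\}\}$ falls under clause (b) and gives $yPx$. The paper's proof only asserts that IDWS is ``straightforward'' for $R^{SLNE}$, and that assertion is false. Your aside on $R^{SLNEH}$ is wrong, though: after the decomposition the top class is still $\{\{z\}\}$, so clause (a) applies again with $\hat{k}=1$ and returns $xPy$.

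The genuine gap is your replacement rule, which violates CCON. Take $x\vartriangleright y$, $\succsim_{1}:\{\{x,y\}\}$ and $\succsim_{2}:\{\{z\}\}$. Then $xI_{\succsim_{1}}y$ (equal nonzero vectors) and $xP_{\succsim_{2}}y$ (tie-break on zero vectors). But in $\succsim_{1}\cdot\succsim_{2}$ both vectors equal $(1,0)\neq\mathbf{0}$, so your rule returns $xIy$, contradicting condition (2). Your two facts never cover the mixed case: an indifference in one summand and a tie-broken preference in the other. $R^{SLNE}$ handles exactly this case by tie-breaking whenever the common vector has a zero entry, and that is what costs it IDWS.

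In fact no repair exists. CCON, AIAW and IDWS already force two individuals absent from a one-class ranking to be indifferent. Suppose $x,y\notin\bigcup\Sigma$ and $xPy$ at the one-class ranking $\Sigma$.
By AIAW and CCON (3), $\Sigma\succ\{\{y\},\{x,y\}\}$ gives $xPy$, so IDWS gives $xPy$ at $\Sigma\succ\{\{y\}\}\succ\{\{x,y\}\}$.
On the other hand, $\{\{x,y\}\}\succ\Sigma\cup\{\{y\}\}$ gives $yPx$ by AIAW and CCON (2), and by IDWS so does $\{\{x,y\}\}\succ\Sigma\succ\{\{y\}\}$.
Since the prefix $\{\{x,y\}\}$ yields $xIy$, conditions (1) and (2) force $yPx$ at $\Sigma\succ\{\{y\}\}$.
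Appending the suffix $\{\{x,y\}\}$, condition (3) gives $yPx$ at $\Sigma\succ\{\{y\}\}\succ\{\{x,y\}\}$, a contradiction. The case $yPx$ is symmetric.

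This indifference of absent pairs is the only place NT enters the proof of 2$\Rightarrow$3 in Theorem~\ref{theorem:SL}; WUVIP is obtained there without NT. So CCON, AIAW and IDWS alone force $R=R^{SL}$, which satisfies NT. The fourth bullet, and with it the independence claim, is therefore false as stated. What can actually be proved is that NT is redundant in Theorem~\ref{theorem:SL}(2).
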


\begin{proposition}\label{prop:indip-plurality}
The five axioms in Theorem \ref{theorem:plurality} are logically independent. In fact, 
\begin{itemize}
\item Lex-cel $R^{L}$ satisfies the five axioms except TO;
\item S plurality $R^{SP}$ satisfies the five axioms except TCON;
\item Constant-$X$ $R^{constX}$ satisfies the five axioms except WUVIP;
\item Split plurality $R^{splitP}$ satisfies the five axioms except WCA;
\item Plurality with tie-breaking by $\vartriangleright$ $R^{P,\vartriangleright}$ satisfies the five axioms except NT.
\end{itemize}
\end{proposition}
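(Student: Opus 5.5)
The plan is to verify, for each of the five listed SRSs, that it satisfies four of the axioms of Theorem \ref{theorem:plurality} and to give an explicit coalitional ranking on which it violates the fifth. Throughout, $x,y,z\in X$ denote distinct individuals.

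Two preliminary observations will be reused. First, an $\{x,y\}$-invariant permutation $\pi$ is a bijection of the finite set $\mathfrak{X}$ that maps $\mathfrak{X}[x]$ into itself, so it maps $\mathfrak{X}[x]$ onto itself, and likewise for $y$. Hence $\pi$ preserves the membership of $x$ and of $y$ in every coalition. Consequently, any rule depending only on the counts $x_k,y_k$ (or on $\dot x_k,\dot y_k$) satisfies WCA. Second, under a top-aligned sum $\succsim_1\topsum\succsim_2$, the top class is $\Sigma_1\cup\Gamma_1$. Therefore the top count, and more generally any top score that is additive over coalitions, of the sum is the sum of the two top counts (scores). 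With zero-padding, the same holds componentwise for the whole vector $\theta$.

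\emph{Lex-cel.} NT and WCA are immediate from the first observation. For WUVIP, $x\in\bigcup\Sigma_1$ and $y\notin\bigcup\Sigma_1$ give $x_1>0=y_1$. For TCON, combine $\theta_{\succsim_1\topsum\succsim_2}=\theta_{\succsim_1}+\theta_{\succsim_2}$ with the fact that $\geq^{L}$ is compatible with vector addition, including strictness when one summand is strict. To violate TO, take $\succsim:\{\{x,y\}\}\succ\{\{x\}\}$ and $\succsim':\{\{x,y\}\}$: they share the top class, but $xP^{L}_{\succsim}y$ while $xI^{L}_{\succsim'}y$.

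\emph{S plurality.} NT, WCA, WUVIP and TO are immediate because the rule reads only $\dot x_1$. For the failure of TCON, use $\succsim_1:\{\{x,y\}\}$ and $\succsim_2:\{\{x\}\}$. Then $xI_{\succsim_1}y$ and $xP_{\succsim_2}y$, yet $\succsim_1\topsum\succsim_2:\{\{x,y\},\{x\}\}$ gives $xI y$.

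\emph{Constant-$X$.} It never yields a strict relation. Hence conditions (2)--(4) in TCON are vacuous and (1) holds, and NT, WCA and TO are trivial. WUVIP fails on $\{\{x\}\}\succ\{\{y\}\}$.

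\emph{Split plurality.} NT and TO are clear. WUVIP holds since $x^{split}_1>0=y^{split}_1$. TCON holds because $x^{split}_1$ is additive over coalitions, so the top score of the top-aligned sum is the sum of the two top scores, and addition preserves weak and strict inequalities. For the failure of WCA, take $\succsim:\{\{x\},\{y,z\}\}$, where $x^{split}_1=1>\tfrac12=y^{split}_1$. Let $\pi$ swap $\{y,z\}$ and $\{y\}$ and fix every other coalition; this $\pi$ is $\{x,y\}$-invariant. Then $\succsim_\pi:\{\{x\},\{y\}\}$ yields $xIy$, which violates WCA.

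\emph{Plurality with tie-breaking by $\vartriangleright$.} WCA, WUVIP and TO are inherited from plurality, since the tie-break depends only on the names of $x,y$. For TCON, note that $xIy$ with $x\neq y$ never occurs, so conditions (1)--(3) are vacuous or trivial for distinct individuals. For (4), if $xPy$ at both $\succsim_1$ and $\succsim_2$, then either some plurality comparison is strict, in which case additivity of top counts gives strictness at the sum, or both are plurality ties with $x\vartriangleright y$, in which case the sum is a plurality tie again broken in favour of $x$. For the failure of NT, take $\succsim:\{\{x,y\}\}$ with $x\vartriangleright y$ and the transposition $\sigma$ of $x$ and $y$. Then $\succsim^\sigma=\succsim$ but $(R_\succsim)^\sigma\neq R_\succsim$.

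The main place requiring care is TCON for the tie-broken rule and for split plurality. There one must check that additivity of the top score under $\topsum$ is exact, that is, that $\Sigma_1$ and $\Gamma_1$ really merge into a single top class. The remaining verifications are routine.
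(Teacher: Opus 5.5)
Your proposal is correct and follows the same route as the paper: for each of the five rules you verify the four retained axioms directly, using additivity of the top counts or scores under $\topsum$ for TCON, and give an explicit counterexample for the dropped axiom. Your counterexamples are in fact somewhat cleaner than the paper's. Your TCON counterexample for S plurality uses genuinely disjoint rankings, whereas the paper's two rankings both have domain $\mathfrak{X}$, so their top-aligned sum is not even defined. Your WCA counterexample for split plurality works for any $|X|\geq 3$, rather than for one specific five-element $X$.
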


\begin{proposition}\label{prop:indipL}
The six axioms in Theorem \ref{theorem:L2} are logically independent.  In fact,
\begin{itemize}
\item  SUM rule with tie-breaking by Lex-cel $R^{SUM,L}$  satisfies the six axioms except IAWS,
\item S lex-cel $R^{SL}$ satisfies the six axioms except TCON;
\item Plurality $R^{P}$ satisfies the six axioms except CCON;
\item Constant-$X$ rule $R^{constX}$ satisfies the six axioms except WUVIP;
\item Split lex-cel $R^{splitL}$ satisfies the six axioms except WCA;
\item Lex-cel with tie-breaking by $\vartriangleright$ $R^{L,\vartriangleright}$ satisfies the six axioms except NT.
\end{itemize}
\end{proposition}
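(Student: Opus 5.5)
We verify the proposition item by item. For each of the six listed SRSs we (a) exhibit a concrete coalitional ranking on which the designated axiom fails, and (b) check the remaining five axioms. Several checks can reuse earlier results. Theorem \ref{theorem:L2} asserts that $R^{L}$ satisfies all six axioms. Proposition \ref{proposition:relationship of consistencies} gives that S lex-cel fails TCON and that plurality fails CCON. Theorem \ref{theorem:SL} (direction $3\Rightarrow1$) gives IAWS for $R^{SL}$, and Theorem \ref{theorem:plurality} gives NT, WCA, WUVIP and TCON for $R^{P}$. The workhorse observation is the following. Under a concatenation sum the vector $\theta_{\succsim_1\cdot\succsim_2}(x)$ is the juxtaposition of $\theta_{\succsim_1}(x)$ and $\theta_{\succsim_2}(x)$. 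Under a top-aligned sum it is the componentwise sum after padding the shorter vector with zeros at the end. The same holds for $\dot\theta$ under concatenation and for $\theta^{split}$ under both sums.

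For the failures, I would use the following examples.
\begin{itemize}
\item $R^{SUM,L}$ fails IAWS: take $\succsim:\{\{x\}\}$, then append a worst class $\{\{y\},\{y,z\}\}$. The sum of counts now favours $y$.
\item $R^{SL}$ fails TCON and $R^{P}$ fails CCON: these are the examples already given in Proposition \ref{proposition:relationship of consistencies}.
\item $R^{constX}$ fails WUVIP: it is trivially violated by $\{\{x\}\}\succ\{\{y\}\}$.
\item $R^{splitL}$ fails WCA: $\succsim:\{\{x\},\{y,z\}\}$ gives $x$ a strict advantage. An $\{x,y\}$-invariant permutation of $\mathfrak{X}$ swapping $\{x\}$ with $\{x,z\}$ (and fixing $\{y,z\}$) makes them tied.
\item $R^{L,\vartriangleright}$ fails NT: on a single class $\{\{x,y\}\}$ the tie is broken by $\vartriangleright$, and this break is not preserved under the transposition of $x$ and $y$.
\end{itemize}

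For the satisfactions, NT and WCA hold for every rule that depends only on the count vectors $\theta$, $\dot\theta$ (unlike $R^{splitL}$, where sizes matter, and $R^{L,\vartriangleright}$, where names matter). CCON and TCON for lexicographic comparisons follow from the juxtaposition and padded-sum observations. If both vectors weakly dominate lexicographically, and at least one strictly, then so does their juxtaposition and their componentwise sum. This handles $R^{L,\vartriangleright}$ (since ties of $R^{L}$ combine into ties, and $\vartriangleright$ is fixed) and $R^{splitL}$. $R^{constX}$ is trivial for all consistency axioms and for IAWS.

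For $R^{SUM,L}$, CCON and TCON need a two-level argument. Total counts add under both sums. If the sums are strictly ordered in at least one component ranking, then the combined sum is strict. If the sums are equal in both, the tie-break is lex-cel, which we have already shown is consistent.

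I expect the main obstacle to be WUVIP for $R^{SUM,L}$. As far as I can see, $\Sigma_1=\{\{x\}\}$, $\Sigma_2=\{\{y\},\{y,z\}\}$ gives $yP^{SUM,L}x$ even though $x\in\bigcup\Sigma_1$ and $y\notin\bigcup\Sigma_1$. So I would first re-examine this case carefully. If the counterexample stands, I would replace the witness by an IAWS-violating rule that keeps lex-cel's priority on the top class. A first candidate is plurality with lex-cel tie-breaking applied to all classes except the top one, reversed in the worst class. I would then recheck CCON, TCON and WUVIP for that replacement.
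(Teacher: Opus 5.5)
You are right that the first bullet is false, so the statement as written cannot be proved. With $\Sigma_1=\{\{x\}\}$ and $\Sigma_2=\{\{y\},\{y,z\}\}$, the sum comparison already gives $yP^{SUM,L}x$. Hence $R^{SUM,L}$ violates WUVIP as well as IAWS. The paper's proof simply calls WUVIP ``straightforward'' here. Its own IAWS counterexample $\{\{x\}\}\succ\{\{x,y\}\}\succ\{\{y\}\}$ also fails, since the lex-cel tie-break returns $xP$ there; your counterexample is the correct one.

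However, your proposal stops before repairing the bullet, and the sketched replacement does not work. Read as lex-cel with the worst class counted negatively, it violates CCON: $\{\{x,y\}\}\succ\{\{y\}\}$ gives $xPy$, $\{\{z\}\}$ is a tie, but the concatenation $\{\{x,y\}\}\succ\{\{y\}\}\succ\{\{z\}\}$ gives $yPx$. No natural tweak will do. NT, WCA, WUVIP, CCON and TCON already force the lex-cel verdict in three situations:
\begin{itemize}
\item every one-class ranking, via $\Gamma\succ\{\{x,y\}\}$ and CCON, then TCON with the tie $\Sigma_1\setminus\Gamma$;
\item every two-class ranking, via the top-aligned split into $\Gamma\succ\Sigma_2$ and the tie $\Sigma_1\setminus\Gamma$;
\item every ranking that splits into agreeing forced parts.
\end{itemize}
A witness must therefore deviate only where no decomposition constrains the verdict. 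For instance, keep $R^{L}$ except on full-domain rankings with at least three classes whose top class is a single coalition containing $u$ but not $v$, and all of whose proper final segments strictly favour $v$ under lex-cel; there, declare $vPu$. Such rankings have three properties:
\begin{itemize}
\item they are never summands of a nontrivial sum;
\item they have no nontrivial top-aligned decomposition;
\item every concatenation split of them is a conflict.
\end{itemize}
So NT, WCA, WUVIP, CCON and TCON survive. Meanwhile, $\{\{x\}\}\succ\{\{z\}\}$ versus $\{\{x\}\}\succ\{\{z\}\}\succ\mathfrak{X}\setminus\{\{x\},\{z\}\}$ breaks IAWS.

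The last bullet is false as well, and your verification misses it. You check CCON and TCON for $R^{L,\vartriangleright}$ but never IAWS, which fails. If $x\vartriangleright y$, then $\{\{x,y\}\}$ gives $xP^{L,\vartriangleright}y$ purely through the tie-break. Adding the worst class $\{\{y\}\}$ gives $\{\{x,y\}\}\succ\{\{y\}\}$, where $yP^{L,\vartriangleright}x$. The paper's claim that IAWS ``follows from construction'' is wrong here too. So this rule fails both NT and IAWS and witnesses nothing.

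A correct witness for NT is a weighted lex-cel. It compares $w_x\theta_{\succsim}(x)$ with $w_y\theta_{\succsim}(y)$ lexicographically, for fixed positive weights that are not all equal.
\begin{itemize}
\item Juxtaposition and componentwise additivity of these vectors give CCON and TCON.
\item WCA, WUVIP and IAWS are immediate.
\item The ranking $\{\{x,y\}\}$ with $w_x\neq w_y$ violates NT.
\end{itemize}
Your treatment of the other four bullets ($R^{SL}$, $R^{P}$, $R^{constX}$, $R^{splitL}$) is correct.
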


\section{Concluding remarks}\label{section:conclusion}

This paper introduces and investigates a series of consistency axioms (CCON, TCON, and BCON). These three axioms require that a social ranking obtained from two ``disjoint'' coalitional rankings $\succsim_{1}$ and $\succsim_{2}$ (e.g., the performance ranking of teams within a research department 1, and that of teams within a research department 2) should be compatible with the social ranking obtained from a particular ``sum'' of the two coalitional rankings (for instance, if the two departments are treated equally, in the ``sum,'' the $k^{\text{th}}$ best coalitions among department 1 are considered indifferent with the $k^{\text{th}}$ best coalitions among department 2; this is an example of what we called the top-aligned sum). 

Our main results from Theorem \ref{theorem:SL} (2$\Rightarrow$3), Theorem \ref{theorem:L2}, and Theorem \ref{theorem:plurality} can be summarized as follows: 
\begin{alignat*}{5}
R^{SL}:  &[\text{NT(+WCA+WUVIP)}]  &+& [\text{CCON}]           & &+& [\text{IDWS}]        &+& [\text{AIAW}], \\
R^{L}:  &[\text{NT+WCA+WUVIP}] &+& [\text{CCON}+\text{TCON}] & &+& [\text{IAWS}]       &  &              , \\
R^{P}:   &[\text{NT+WCA+WUVIP}] &+& [\text{TCON}]            &&+& [\text{TO}]          &  &              .
\end{alignat*}
We can observe that axioms used in the characterizations of $R^{SL}$, $R^{L}$, and $R^{P}$ (WCA and WUVIP for $R^{SL}$ are placed between two round brackets to indicate that they are implied by the other axioms) can be classified in four main  blocks with a common meaning. The first block (NT, WCA, and WUVIP) are the common basis of the three SRSs. The second block (CCON and/or TCON) describes the differences of the three SRSs in terms of consistencies: $R^{L}$ satisfies both CCON and TCON, while $R^{SL}$ satisfies only CCON, and $R^{P}$ satisfies only TCON (Proposition \ref{proposition:relationship of consistencies}). The third block (IDWS, IAWS, and TO) refers to the independence of the worst/non-top class(es), and the fourth block (AIAW) is the axiom specifying the winners when all the coalitions are indifferent under the given coalitional ranking. 

In summary, our new consistency axioms effectively  capture the differences of some SRSs, as shown by our axiomatic characterizations of three different SRSs, as well as by our results summarized in Figures \ref{figure:consistencies} and \ref{figure:concatenation consistencies}. Further investigation of the logical relationship between the consistency axioms in connection with other SRSs from the literature is a promising future research topic. 

\appendix
\section{Complementary SRSs}
\label{appendix:complementary SRSs}
Table \ref{table:index} summarizes the list of complementary SRSs that have been defined in the paper and used to prove the independence of the axioms in the next section.
\begin{table}[H]
\centering
\caption{Index of SRSs}
\begin{tabular}{{p{5.8cm} l l}}
\toprule
\textbf{Name}& \textbf{SRS} & \textbf{Def.}\\
\midrule
Anti-plurality & $R^{AP}$& Def. \ref{definition:pluralities}\\
Constant-$X$ & $R^{constX}$ & Def. \ref{definition:constant rule}\\
Dual S lex-cel & $R^{DSL}$& Def. \ref{definition:variants of lex-cel and S lex-cel}\\
IIS & $R^{IIS}$ & Def. \ref{definition:IIS}\\
Inverse dual S lex-cel & $R^{IDSL}$& Def. \ref{definition:variants of lex-cel and S lex-cel}\\
Lex-cel & $R^{L}$& Def. \ref{definition:lex-cel and S lex-cel} \\
Lex-cel with tie-breaking by $\vartriangleright$ & $R^{L,\vartriangleright}$ & Def. \ref{definition:variants of lex-cel and S lex-cel}\\
Plurality & $R^{P}$& Def. \ref{definition:pluralities}\\

Plurality with tie-breaking by $\vartriangleright$ & $R^{P,\vartriangleright}$ & Def. \ref{definition:variants of plurality}\\
S lex-cel & $R^{SL}$& Def. \ref{definition:lex-cel and S lex-cel} \\
S lex-cel with 
attention to the non-existence & $R^{SLNE}$ & Def. \ref{definition:further variants of lex-cel}\\
S lex-cel with 
attention to the non-existence in higher classes& $R^{SLNEH}$ & Def. \ref{definition:further variants of lex-cel}\\
S lex-cel with 
precedence on unanimity & $R^{SLUN}$ & Def. \ref{definition:further variants of lex-cel}\\
S plurality & $R^{SP}$& Def. \ref{definition:pluralities}\\
S sum rule  & $R^{SSUM}$ & Def. \ref{definition:sum rules}\\
S sum rule  with tie-breaking by S lex-cel & $R^{SSUM,SL}$ & Def. \ref{definition:sum rules}\\
Split lex-cel & $R^{splitL}$ & Def. \ref{definition:variants of lex-cel and S lex-cel}\\
Split Plurality & $R^{splitP}$& Def. \ref{definition:variants of plurality}\\
Sum rule & $R^{SUM}$ & Def. \ref{definition:sum rules}\\
Sum rule  with tie-breaking by lex-cel & $R^{SUM,L}$ & Def. \ref{definition:sum rules}\\
\bottomrule
\end{tabular}
\label{table:index}
\end{table}

\section{Independence of the axioms}

\setcounter{proposition}{4}

\begin{proof}[\bf Proof of Proposition \ref{prop:indipSL1}]
The proof is almost straightforward. 

\textbf{On $R^{IDSL}$}. It is straightforward to confirm that $R^{ISDL}$ satisfies NT, CCON, and AIAW---however, it does not satisfy IAWS. A counterexample is as follows: for $\succsim:\{\{x\}\}$ and $\succsim':\{\{x\}\}\succ'\{\{y\}\}$. Then, we have $xP^{IDSL}_{\succsim}y$ and $yP_{\succsim'}^{ISDL}x$, which contradict IAWS.

\textbf{On $R^{L}$}, the statement is almost straightforward. 

\textbf{On $R^{SP}$}. It is easy to confirm that $R^{SP}$ satisfies NT, AIAW, and IAWS---however, it does not satisfy CCON. A counterexample is as follows: let $\succsim:\{\{x,y\}\}$ and $\succsim':\{\{y\}\}$. Then, we have $xI_{\succsim}^{SP}y$, $xP_{\succsim'}^{SP}y$, and yet $xI^{SP}_{\succsim\cdot\succsim'}y$. This contradicts CCON (IP-CCON).

\textbf{On $R^{SLNEH}$}. It is straightforward to confirm that $R^{SLNEH}$ satisfies AIAW and IAWS, but not NT. We will prove that $R^{SLNEH}$ satisfies CCON. 
By definition of $R^{SLNEH}$, we can say that for any $\succsim\in\domain$ and $x,y\in X$, it follows that 
\begin{equation}
\label{equation:I^SLNEH}
xI^{SLNEH}_{\succsim}y\iff \dot{\theta}_{\succsim}(x)=\dot{\theta}_{\succsim}(y)=\boldsymbol{1}.
\end{equation}
Assume that $xI_{\succsim}^{1}y$ and $xI^{SLNEH}_{\succsim'}y$. By equation \eqref{equation:I^SLNEH}, we can say that $\dot{\theta}_{\succsim}(x)=\dot{\theta}_{\succsim}(y)=\boldsymbol{1}$ and $\dot{\theta}_{\succsim'}(x)=\dot{\theta}_{\succsim'}(y)=\boldsymbol{1}$. Hence, we note that $\dot{\theta}_{\succsim\cdot\succsim'}(x)=\dot{\theta}_{\succsim\cdot\succsim'}(y)=\boldsymbol{1}$. By equation \eqref{equation:I^SLNEH} again, we can conclude that $xI^{SLNEH}_{\succsim\cdot\succsim'}y$. This proves that $R^{SLNEH}$ satisfies II-CCON.

Assume that $xI_{\succsim}^{SLNEH}y$ and $xP_{\succsim'}^{SLNEH}y$. By equation \ref{equation:I^SLNEH}, we have $\dot{\theta}_{\succsim}(x)=\dot{\theta}_{\succsim'}(y)=\boldsymbol{1}$. If $xP_{\succsim'}^{SLNEH}y$ is obtained by step (a) (i.e., there exists $\hat{k}\in[l]$ such that $\dot{x}_{k}=\dot{y}_{k}=1$ for all $k<\hat{k}$ and $\dot{x}_{\hat{k}}=\dot{y}_{\hat{k}}=0$ and $x\vartriangleright y$), we can infer that (a) is also applied in $\succsim\cdot\succsim'$. Hence, we obtain that $xP^{SLNEH}_{\succsim\cdot\succsim'}y$. Next, if $xP_{\succsim'}^{SLNEH}y$ is obtained by step (b), we can infer that (b) is also applied in $\succsim\cdot\succsim'$. Hence, we obtain that $xP_{\succsim\cdot\succsim'}^{SLNEH}y$. In either case, we have verified that $xP^{SLNEH}_{\succsim\cdot\succsim'}y$. This proves that $R^{SLNEH}$ satisfies IP-CCON.

Similarly, PI-CCON and PP-CCON are shown. 
\end{proof}

\begin{proof}[\bf Proof of Proposition \ref{prop:indipSL2}]
Most of the proof is straightforward. We will prove only the non-trivial sections. 

\textbf{On $R^{SSUM,SL}$:} NT, and AIAW are straightforward. We can easily confirm that for any $\succsim\in\domain$ and $x,y\in X$, 
\begin{align}
\label{equation:I^{SSUM,SL}}
xI^{SSUM,SL}_{\succsim}y&\iff \dot{\theta}_{\succsim}(x)=\dot{\theta}_{\succsim}(y).
\end{align}
CCON is also straightforward by equation \ref{equation:I^{SSUM,SL}}. Furthermore, $R^{SSUM,SL}$ does not satisfy IDWS. A counterexample is as follows. Let $\succsim:\{\{x\}\}\succ\{\{y\},\{y,z\}\}$, and $\succsim':\{\{x\}\}\succ'\{\{y\}\}\succ'\{\{y,z\}\}$. Then, we have $xP^{SSUM,SL}_{\succsim}y$ and $yP^{SSUM,SL}_{\succsim'}x$. These contradict IDWS.\\


\textbf{On $R^{SLUN}$:}  CCON and NT are straightforward. We can verify that $R^{SLUN}$ satisfies IDWS as follows: Take $x,y\in X$ and $\succsim:\Sigma_1\succ\ldots\succ\Sigma_l$, and suppose $x P_\succsim^{SLUN} y$. If there exists $\hat{k}<l$ such that $\tilde{x}_{k}=\tilde{y}_{k}$ for all $k<\hat{k}$ and $\tilde{x}_{\hat{k}}>\tilde{y}_{\hat{k}}$, then the proof is straightforward (this is because the decomposition of the worst set would not change the first $\hat{k}$ equivalence classes). Suppose $\tilde{x}_{k}=\tilde{y}_{k}$ for all $k<l$ and $\tilde{x}_l>\tilde{y}_l$. Observe that, if $\tilde{y}_l=0$ then $x P_{\succsim'}^{SLUN} y$ , for any $\succsim'$ obtained by decomposing the worst equivalence class. If $\tilde{y}_l=1$ then $\tilde{x}_l=2$ and $x$ are contained in every set of $\Sigma_l$, then $x P_{\succsim'}^{SLUN} y$, with $\succsim'$ taken as before. So, $R^{SLUN}$ satisfies IDWS. Finally, $R^{SLUN}$ does not satisfy AIAW. A counterexample is $\succsim:\{\{x,y\},\{x\}\}$. Then, AIAW demands that $x$ and $y$ are judged indifferently, but $xP^{SLUN}_{\succsim}y$.   \\
\textbf{On $R^{SP}$:} As in the proof of Proposition \ref{prop:indipSL1}, CCON is not satisfied, while AIAW and NT hold. Moreover, by construction, IDWS is satisfied.\\
\textbf{On $R^{SLNE}$:} It is straightforward to confirm that $R^{SLNE}$ satisfies IDWS and AIAW, but not NT. Observe that \begin{align}
\label{equation:I^SLNE}
xI^{SLNE}_{\succsim}y&\iff \dot{\theta}_{\succsim}(x)=\dot{\theta}_{\succsim}(y)=\boldsymbol{1}.
\end{align}
Using equation \ref{equation:I^SLNE}, it is easy to prove that $R^{SLNE}$ satisfies CCON following the same steps in the proof of Proposition \ref{prop:indipSL1} for $R^{SLNEH}$. 

\end{proof}

\begin{proof}[\bf Proof of Proposition \ref{prop:indip-plurality}]
Most of the proof is straightforward. We will prove only the non-trivial sections. \\
\textbf{On $R^{L}$:} To observe that TO is not satisfied, take $\succsim:\{x,y\}\succ\mathfrak{X}\setminus\{\{x,y\}\}$ and $\succsim':\{x,y\}\succ'\{x\}\succ'\mathfrak{X}\setminus\big\{\{x\},\{x,y\}\big\}$. Then, we have $xI_{\succsim} y$ and $xP_{\succ'}y$, violating TO. It is immediately shown that NT ,WCA, WUVIP, and TCON are satisfied.\\
\textbf{On $R^{SP}$:} Take $\succsim_1:\{\{x\}\}\succsim\mathfrak{X}\setminus\{\{x\}\}$ and $\succsim_2:\{\{x,y\}\}\succsim\mathfrak{X}\setminus\{\{x,y\}\}$, then $x P_{\succsim_1}^{SP} y$ and $x I_{\succsim_2}^{SP} y$, but $x I_{\succsim_1\topsum\succsim_2}^{SP} y$ and TCON are not satisfied. It is straightforward to show that NT, WCA, WUVIP, and TO are satisfied.  \\
\textbf{On $R^{constX}$:} By definition, for any $x,y\in X$, $x I^{constX} y$ then WUVIP is not satisfied---on the other hand, NT, WCA,  TCON, and TO are satisfied.\\
\textbf{On $R^{splitP}$:} Take $X=\{a,b,c,x,y\}$ and $\succsim\in\domain$ with $\succsim:\Sigma_1=\big\{\{x,a\},\{y,b,c\}\big\}\succ \mathfrak{X}\setminus\Sigma_1$. Define a $\{x,y\}$-invariant permutation $\pi:\mathfrak{X}\to\mathfrak{X}$ such that $\pi(\{x,a\})=\{x,b,c\}$, $\pi(\{y,b,c\})=\{y,a\}$ and leaves the other sets unchanged, then $x P_\succsim^{splitP} y$, but $y P_{\succsim_\pi}^{splitP} x$. It is simple to show that NT, WUVIP, TCON, and TO are satisfied because the same arguments used for plurality $R^P$ can be used here.\\
\textbf{On $R^{P,\vartriangleright}$:} Take $x,y\in X$ such that $x\vartriangleright y$ and a permutation $\pi:X\to X$ such that $\pi(x)=y$, $\pi(y)=x$ and does not change the other elements. Take $\succsim\in\domain$ with $\succsim:\{x,y\}\succ \mathfrak{X}\setminus\{\{x,y\}\}$. By definition $x P_{\succsim}^{P,\vartriangleright} y$, but $\pi(y) P_{\succsim_\pi}^{P,\vartriangleright} \pi(x)$ and NT is not satisfied. WCA is satisfied because for any $\{x,y\}$-invariant permutation $\pi:\mathfrak{X}\to\mathfrak{X}$ if $x P_\succsim^P y$, then $x P_{\succsim_\pi}^P y$ and if $x I_\succsim^P y$, then $x I_{\succsim_\pi}^P y$. WUVIP follows directly by definition. By construction, we only have to check property (4) in Definition \ref{definition:consistencies}. If $x P_{\succsim}^{P,\vartriangleright} y$ and $x P_{\succsim'}^{P,\vartriangleright} y$, then in $\succsim''=\succsim\bottomsum\succsim'$ either $x P_{\succsim''}^{P} y$ or $x I_{\succsim''}^{P} y$ and then TCON is satisfied. TO is satisfied by construction.\\
\end{proof}

\begin{proof}[\bf Proof of Proposition \ref{prop:indipL}]
\textbf{On $R^{SUM,L}$:} Take $\succsim:\{\{x\}\}\succ\{\{x,y\}\}$ and $\succsim':\{\{x\}\}\succ\{\{x,y\}\}\succ\{\{y\}\}$,  then $x P_{\succsim}^{SUM,L} y$---however, $x I_{\succsim'}^{SUM, L} y$ and IAWS are not satisfied. It is straightforward to show that NT, WCA, WUVIP, TCON, and CCON are satisfied---by the definition of Lex-cel.   \\
\textbf{On $R^{SL}$:} The proof is analogous to the proof of Proposition \ref{prop:indip-plurality}. Axioms IAWS and CCON are satisfied by the definition of the lexicographic comparison of the vectors $\theta_{\succsim}(\cdot)$. \\
\textbf{On $R^{P}$:} By Proposition \ref{proposition:relationship of consistencies} CCON is not satisfied, and by Theorem \ref{theorem:plurality} TCON, WUVIP, WCA, and NT are satisfied. Axiom IAWS follows immediately by definition.\\
\textbf{On $R^{constX}$:} By construction, WUVIP is not satisfied and the other axioms hold.   \\
\textbf{On $R^{splitL}$:}  The proof is analogous to the proof of Proposition \ref{prop:indip-plurality} and WCA is not satisfied. Axioms CCON is satisfied because by taking two rankings---$\succsim_1$ and $\succsim_2$---the split sums do not change in $\succsim_1\cdot\succsim_2$. A similar argument applies to show that axiom IAWS hold. \\
\textbf{On $R^{L,\vartriangleright}$:} The proof is analogous to the proof of Proposition \ref{prop:indip-plurality}, and both CCON and IAWS follow from construction.  
\end{proof}

\section*{Acknowledgments}
Takahiro Suzuki was supported by JSPS KAKENHI Grant Number JP25K01302. Stefano Moretti acknowledges financial support from the ANR project THEMIS (ANR-20-CE23-0018). Michele Aleandri is a member of GNAMPA of the Istituto Nazionale di Alta Matematica (INdAM).


\bibliographystyle{cas-model2-names}

\bibliography{library}



\end{document}